%% file: main.tex
\documentclass[12pt]{article}
\usepackage{amsmath, amssymb, amsthm}
\usepackage{natbib}
\usepackage{geometry}
\usepackage{bm}
\usepackage[hidelinks]{hyperref}
\usepackage{booktabs}
\usepackage{multirow}
\usepackage{threeparttable}
\usepackage{graphicx}
\usepackage{color}
\usepackage{framed}
\usepackage{listings}
\newtheorem{theorem}{Theorem}
\newtheorem{proposition}[theorem]{Proposition}
\newtheorem{lemma}[theorem]{Lemma}

\newtheorem{remark}{Remark}
\newtheorem{assumption}{Assumption}

\newcommand{\bA}{\bm{A}}

\newcommand{\bC}{\bm{C}}
\newcommand{\bD}{\bm{D}}
\newcommand{\bG}{\bm{G}}
\newcommand{\bI}{\bm{I}}
\newcommand{\bM}{\bm{M}}
\newcommand{\bR}{\bm{R}}

\newcommand{\bV}{\bm{V}}
\newcommand{\bX}{\bm{X}}
\newcommand{\bY}{\bm{Y}}

\newcommand{\bbeta}{\bm{\beta}}
\newcommand{\bmu}{\bm{\mu}}
\newcommand{\bphi}{\bm{\phi}}

\newcommand{\balpha}{\bm{\alpha}}
\newcommand{\bgamma}{\bm{\gamma}}

\newcommand{\bxi}{\bm{\xi}}

\newcommand{\bSigma}{\bm{\Sigma}}
\newcommand{\bOmega}{\bm{\Omega}}
\newcommand{\bLambda}{\bm{\Lambda}}
\newcommand{\hbeta}{\hat{\bm{\beta}}}
\newcommand{\htheta}{\hat{\theta}}
\newcommand{\hpi}{\hat{\pi}}
\newcommand{\hQ}{\hat{Q}}
\newcommand{\hphi}{\hat{\bm{\phi}}}
\newcommand{\hbxi}{\hat{\bm{\xi}}}
\newcommand{\hbphi}{\hat{\bm{\phi}}}

\newcommand{\bJ}{\bm{J}}

\newcommand{\bx}{\bm{x}}
\newcommand{\E}{\mathbb{E}}
\newcommand{\Var}{\mathrm{Var}}
\newcommand{\Cov}{\mathrm{Cov}}
\newcommand{\Prob}{\mathbb{P}}
\newcommand{\pto}{\xrightarrow{p}}
\newcommand{\dto}{\xrightarrow{d}}
\begin{document}

\title{Doubly Robust Quadratic Inference Functions\\
for Causal Inference in Cluster Randomized Trials}
\author{Hengshi Yu\\[0.25em]
        \small Department of Biostatistics\\
        \small University of Michigan\\
        \small \texttt{hengshi@umich.edu}}
\date{}
\maketitle

\begin{abstract}
Quadratic inference functions (QIF) provide a robust and efficient alternative to
generalized estimating equations (GEE) for marginal regression with correlated data,
particularly in cluster randomized trials (CRTs). However, existing QIF methodology
does not account for confounding due to covariate imbalance between treatment arms,
a common concern in observational CRTs or CRTs with prognostic covariate adjustment.
We propose a doubly robust QIF (DR-QIF) estimator that combines doubly robust
pseudo-outcomes, constructed from propensity score and outcome regression models,
with the QIF extended score equations. The DR-QIF estimator is consistent for the
average treatment effect when either the propensity score model or the outcome
regression model is correctly specified, but not necessarily both. We show that
DR-QIF is more efficient than doubly robust GEE (DR-GEE) when the working
correlation structure is misspecified, and we characterize the asymptotic efficiency
gain analytically. For cross-sectional CRTs the two estimators are algebraically
identical; efficiency gains emerge in longitudinal CRTs with strong temporal
correlation, reaching 3.5\% at $N=120$ and $T=8$ repeated measures.
Finite-sample properties are evaluated via Monte Carlo simulation, and the method
is illustrated using data from the WASH Benefits Kenya cluster randomized trial.
\end{abstract}

\tableofcontents
\newpage

\input{sections/01_introduction}

\input{sections/02_setup}
\input{sections/03_drconstruction}
\input{sections/04_drqif}
\input{sections/05_theory}
\input{sections/06_variance}
\input{sections/07_simulation}
\input{sections/08_application}
\input{sections/09_discussion}

\appendix
\input{sections/A_proofs}
\input{sections/B_supplement}

\bibliographystyle{plainnat}
\bibliography{references}

\end{document}

%% file: sections/01_introduction.tex
%% Section 1: Introduction
\section{Introduction}

Cluster randomized trials (CRTs) are experiments in which intact social or
administrative units---such as hospitals, schools, or villages---are
randomized to treatment conditions, with outcomes measured on individuals
nested within those clusters \citep{hayes2009}.  Because individual outcomes
within the same cluster tend to be correlated, the analysis of CRTs must
account for the intracluster correlation (ICC).  Generalized estimating
equations (GEE) \citep{liang1986} are widely used for this purpose,
estimating population-averaged (marginal) intervention effects while allowing
flexible specification of the within-cluster working correlation structure.
The sandwich covariance estimator is consistent even when the working
correlation is misspecified \citep{preisser2003}.

The quadratic inference function (QIF) was introduced by \citet{qu2000} as an
alternative to GEE that improves efficiency by replacing the inverse working
correlation with a linear combination of basis matrices and minimizing a
generalized method of moments (GMM) objective.  \citet{yu2020} showed that
QIF and GEE produce identical estimates when the marginal mean model contains
only cluster-level covariates or cluster sizes are equal, but QIF can be more
efficient in the presence of individual-level covariates.  Additional results
on the estimation and inference properties of QIF are given in \citet{yu2022}.

In both observational CRTs and randomized trials with prognostic covariate
adjustment, confounding by covariates is a common concern.  Covariate-adjusted
estimation of the average treatment effect (ATE) typically relies on a
propensity score model or an outcome regression model.  Doubly robust (DR)
estimation \citep{robins1994, bang2005} provides protection against
misspecification of either nuisance model: the ATE estimator remains
consistent when at least one of the two models is correctly specified.
DR-GEE estimators for CRT and longitudinal settings have been studied
\citep{seaman2009, luijken2019}, and multiply robust extensions have appeared
recently \citep{mr2024}, but existing DR approaches use GEE as the downstream
estimation engine and do not exploit the efficiency advantage of QIF.

In this paper, we propose the doubly robust QIF (DR-QIF) estimator, which
constructs AIPW pseudo-outcomes and plugs them directly into the QIF extended
score equations.  The resulting estimator is consistent for the ATE when
either the propensity score model or the outcome model is correctly specified,
and is at least as efficient as doubly robust GEE when the working correlation
is misspecified.  The estimator has a closed-form expression and does not
require iterative optimization.  We derive its asymptotic distribution
accounting for the estimated nuisance parameters, study bias-corrected
sandwich variance estimation, and characterize when the efficiency gain over
DR-GEE is available.  In particular, we show that for cross-sectional CRTs
with the standard exchangeable basis, DR-QIF and DR-GEE are algebraically
identical; efficiency gains require longitudinal or hierarchical structure.
We evaluate finite-sample properties through simulation and apply the method
to data from the WASH Benefits Kenya cluster randomized trial \citep{luby2018}.

%% file: sections/02_setup.tex
%% Section 2: Data Structure, Estimand, and Identification
\section{Data Structure, Causal Estimand, and Identification}
\label{sec:setup}

\subsection{Data Structure}

Suppose we observe data from $N$ independent clusters.  Let $n_i$ denote the
size of cluster $i$, with $n = \sum_{i=1}^N n_i$ the total sample size.  For
individual $j$ in cluster $i$, let $Y_{ij} \in \mathbb{R}$ be the outcome of
interest (continuous or binary), $A_i \in \{0, 1\}$ the cluster-level binary
treatment indicator, and $\bX_{ij} \in \mathbb{R}^p$ a vector of individual-
and cluster-level covariates measured prior to randomization.  We also write
$\bX_i = (\bX_{i1}^T, \ldots, \bX_{in_i}^T)^T$ for the stacked covariate
matrix of cluster $i$.  The observed data are
$\mathcal{O} = \{(A_i,\, \bY_i,\, \bX_i) : i = 1, \ldots, N\}$,
where $\bY_i = (Y_{i1}, \ldots, Y_{in_i})^T$.

\begin{remark}[Cluster- versus individual-level treatment]
We focus on cluster-level treatment $A_i$ throughout, which is the canonical CRT
setting.  Extension to individual-level treatment (as in partially nested or
multisite trials) is discussed in Section~\ref{sec:discussion}.
\end{remark}

\subsection{Causal Estimand}

We adopt the potential outcomes framework \citep{rubin1974}.  For each
individual $j$ in cluster $i$, let $Y_{ij}(a)$ denote the potential outcome
under treatment $A_i = a$ for $a \in \{0, 1\}$.  We define the
\emph{individual average treatment effect} as
$\tau_{ij} = Y_{ij}(1) - Y_{ij}(0)$
and the \emph{average treatment effect} (ATE) as the primary estimand:
\begin{equation}
  \theta_0 \;=\; \E[Y_{ij}(1) - Y_{ij}(0)].
  \label{eq:ate}
\end{equation}
Under appropriate assumptions (Assumption~\ref{ass:consistency}--\ref{ass:overlap}
below), $\theta_0$ is identified from the observed data as the contrast
$\E[Q_1(\bX_{ij})] - \E[Q_0(\bX_{ij})]$, where
$Q_a(\bX_{ij}) = \E(Y_{ij} \mid A_i = a,\, \bX_{ij})$.

\begin{remark}[Population-averaged interpretation]
The ATE $\theta_0$ averages over the joint distribution of $(\bX_{ij}, n_i)$,
yielding a \emph{population-averaged} (marginal) causal effect, consistent with
the estimand targeted by GEE and QIF.  This distinguishes it from the
cluster-average treatment effect $N^{-1} \sum_{i=1}^N n_i^{-1} \sum_j \tau_{ij}$,
which is also of interest in some CRT contexts.
\end{remark}

\subsection{Identification Assumptions}

\begin{assumption}[Consistency]
\label{ass:consistency}
$Y_{ij} = Y_{ij}(A_i)$ almost surely, for all $i, j$.
\end{assumption}

\begin{assumption}[Conditional exchangeability at the cluster level]
\label{ass:exchangeability}
$(Y_{ij}(0), Y_{ij}(1)) \perp A_i \mid \bX_i$, for all $i, j$.
\end{assumption}

\begin{assumption}[Positivity (overlap)]
\label{ass:overlap}
There exists $\epsilon > 0$ such that
$\epsilon \leq \pi_i \leq 1 - \epsilon$ almost surely,
where $\pi_i = \Prob(A_i = 1 \mid \bX_i)$ is the \emph{propensity score}.
\end{assumption}

\begin{assumption}[Cluster-level randomization / no interference]
\label{ass:sutva}
The potential outcome $Y_{ij}(a)$ depends only on the treatment assigned to
cluster $i$, not on the treatment assigned to other clusters.
\end{assumption}

Assumption~\ref{ass:consistency} is the standard consistency (SUTVA) condition
restricted to cluster-level treatment.  Assumption~\ref{ass:exchangeability}
is automatically satisfied in a \emph{randomized} CRT where $A_i$ is
independent of all individual and cluster characteristics, in which case
$\pi_i = \Prob(A_i = 1)$ is known.  It also covers the \emph{conditionally
randomized} setting where allocation probabilities depend on observed cluster-
level covariates, and the \emph{observational} CRT setting where clusters
self-select into treatment.  Assumption~\ref{ass:overlap} ensures each cluster
has a nonzero probability of receiving either arm; it is guaranteed by design
in fully randomized trials but must be assessed in observational settings.

\subsection{Nuisance Functions and Notation}

Let $\pi(\bX_i) = \Prob(A_i = 1 \mid \bX_i)$ denote the propensity score,
and let $Q_a(\bX_{ij}) = \E(Y_{ij} \mid A_i = a, \bX_{ij})$ denote the
conditional outcome mean for $a \in \{0, 1\}$.  Under
Assumptions~\ref{ass:consistency}--\ref{ass:sutva}, the ATE is identified as
\begin{equation}
  \theta_0 \;=\; \E\!\left[ Q_1(\bX_{ij}) - Q_0(\bX_{ij}) \right].
  \label{eq:identification}
\end{equation}
We posit parametric (or semiparametric) working models for both nuisance
functions:
\begin{align}
  \pi(\bX_i;\, \bxi) &= \Prob(A_i = 1 \mid \bX_i;\, \bxi),
      \quad \bxi \in \Xi \subseteq \mathbb{R}^{d_\xi}, \label{eq:psmodel}\\
  Q_a(\bX_{ij};\, \bgamma_a) &= \E(Y_{ij} \mid A_i = a, \bX_{ij};\, \bgamma_a),
      \quad \bgamma_a \in \Gamma \subseteq \mathbb{R}^{d_\gamma}. \label{eq:outmodel}
\end{align}
We write $\hbxi$ for an estimator of $\bxi$ and $\hat{\bgamma}_a$ for an
estimator of $\bgamma_a$, obtained on the same sample or via sample splitting
(see Remark~\ref{rem:splitting}).  We use shorthand $\hpi_i = \pi(\bX_i; \hbxi)$
and $\hQ_{aij} = Q_a(\bX_{ij}; \hat{\bgamma}_a)$.  The model
$\pi(\bX_i; \bxi)$ is \emph{correctly specified} if there exists $\bxi^*$
such that $\pi(\bX_i; \bxi^*) = \pi(\bX_i)$ almost surely; analogously for
$Q_a(\bX_{ij}; \bgamma_a)$.

\begin{remark}[Sample splitting and cross-fitting]
\label{rem:splitting}
To allow flexible (possibly nonparametric) estimation of $\pi$ and $Q_a$
without undersmoothing conditions, one may use cross-fitting
\citep{chernozhukov2018}: partition $\{1,\ldots,N\}$ into $K$ folds, estimate
nuisance functions on each fold's complement, and evaluate pseudo-outcomes on
each fold.  This decouples the convergence rates of nuisance estimators from
the $\sqrt{N}$-rate of the ATE estimator.  For simplicity, Sections
\ref{sec:drqif}--\ref{sec:theory} present the one-sample (no-splitting)
theory; cross-fitting is discussed in Section~\ref{sec:discussion}.
\end{remark}

\subsection{Standard GEE Revisited}

To fix notation for the comparators, recall that the GEE estimator of a
marginal regression parameter $\bbeta$ (where $g(\mu_{ij}) = \bX_{ij}^T \bbeta$
for some link $g$) solves
\begin{equation}
  \sum_{i=1}^N \bD_i^T \bV_i^{-1} (\bY_i - \bmu_i) = \bm{0},
  \label{eq:gee}
\end{equation}
where $\bD_i = \partial \bmu_i / \partial \bbeta^T$,
$\bV_i = \bA_i^{1/2} \bR_i(\balpha) \bA_i^{1/2}$ is the working covariance
with diagonal variance matrix $\bA_i = \mathrm{diag}\{v(\mu_{ij})\}$ and
working correlation $\bR_i(\balpha)$, and $v(\cdot)$ is the variance function.
The sandwich covariance estimator is consistent for $\Var(\hbeta)$ regardless
of the working correlation specification \citep{liang1986}.

\subsection{Standard QIF Revisited}

The QIF \citep{qu2000} approximates $\bR_i^{-1}(\balpha)$ by a linear
combination of $m$ known basis matrices,
\begin{equation}
  \bR_i^{-1}(\balpha) \approx \sum_{r=1}^{m} \alpha_r \bM_r,
  \label{eq:basis}
\end{equation}
where $\bM_1 = \bI_{n_i}$ (identity) is always included.  For an exchangeable
working correlation, a natural choice is $m = 2$ with $\bM_2 = \bm{J}_{n_i}$
(matrix of all ones); for AR-1, $m = 2$ with $\bM_2$ being the first-order lag
matrix.  Define the extended score vector for cluster $i$ as
\begin{equation}
  \bm{g}_i(\bbeta) =
  \begin{pmatrix}
    \bD_i^T \bA_i^{-1/2} \bM_1 \bA_i^{-1/2} (\bY_i - \bmu_i) \\
    \vdots \\
    \bD_i^T \bA_i^{-1/2} \bM_m \bA_i^{-1/2} (\bY_i - \bmu_i)
  \end{pmatrix} \in \mathbb{R}^{mp},
  \label{eq:qif_score}
\end{equation}
and let $\bar{\bm{g}}_N(\bbeta) = N^{-1} \sum_i \bm{g}_i(\bbeta)$ and
$\bC_N(\bbeta) = N^{-1} \sum_i \bm{g}_i(\bbeta) \bm{g}_i(\bbeta)^T$ be the
empirical covariance matrix.  The QIF objective is
\begin{equation}
  Q_N(\bbeta) = N\, \bar{\bm{g}}_N(\bbeta)^T \bC_N(\bbeta)^{-1} \bar{\bm{g}}_N(\bbeta),
  \label{eq:qif_obj}
\end{equation}
and the QIF estimator is $\hbeta_{\mathrm{QIF}} = \mathrm{argmin}_{\bbeta} Q_N(\bbeta)$.
The first-order conditions are
\begin{equation}
  \frac{\partial Q_N(\bbeta)}{\partial \bbeta} =
  2N \left(\frac{\partial \bar{\bm{g}}_N}{\partial \bbeta^T}\right)^T
  \bC_N(\bbeta)^{-1} \bar{\bm{g}}_N(\bbeta) = \bm{0}.
  \label{eq:qif_foc}
\end{equation}
Under standard regularity conditions, $\sqrt{N}(\hbeta_{\mathrm{QIF}} - \bbeta_0)
\dto \mathcal{N}(\bm{0}, \bOmega_{\mathrm{QIF}})$ where
$\bOmega_{\mathrm{QIF}} = (\bG^T \bLambda^{-1} \bG)^{-1}$, with
$\bG = \E[\partial \bm{g}_i / \partial \bbeta^T]$ and
$\bLambda = \E[\bm{g}_i \bm{g}_i^T]$.  When $m > p$, QIF is more efficient
than GEE in the sense that $\bOmega_{\mathrm{GEE}} - \bOmega_{\mathrm{QIF}}$
is positive semidefinite whenever the working correlation is misspecified
\citep{qu2000}.

%% file: sections/03_drconstruction.tex
%% Section 3: Doubly Robust Pseudo-Outcome Construction
\section{Doubly Robust Pseudo-Outcome Construction}
\label{sec:drconstruction}

\subsection{The Individual-Level DR Pseudo-Outcome}

The key building block of our approach is the \emph{doubly robust
pseudo-outcome} for individual $j$ in cluster $i$, defined as
\begin{equation}
  \phi_{ij}(\bxi, \bgamma) =
    \frac{A_i}{\pi(\bX_i;\bxi)}\bigl[Y_{ij} - Q_1(\bX_{ij};\bgamma_1)\bigr]
    + Q_1(\bX_{ij};\bgamma_1)
    -
    \frac{1 - A_i}{1 - \pi(\bX_i;\bxi)}\bigl[Y_{ij} - Q_0(\bX_{ij};\bgamma_0)\bigr]
    - Q_0(\bX_{ij};\bgamma_0).
  \label{eq:drpseudo}
\end{equation}
This is the standard \emph{augmented inverse probability weighted} (AIPW)
influence-function representation of the ATE \citep{robins1994, bang2005}.

\begin{proposition}[Double robustness of the pseudo-outcome mean]
\label{prop:dr_mean}
Under Assumptions~\ref{ass:consistency}--\ref{ass:sutva}, if either
\begin{enumerate}
  \item[(i)] the propensity score model is correctly specified, i.e.\
      $\pi(\bX_i;\bxi^*) = \pi(\bX_i)$ a.s., or
  \item[(ii)] the outcome model is correctly specified, i.e.\
      $Q_a(\bX_{ij};\bgamma_a^*) = Q_a(\bX_{ij})$ a.s.\ for $a \in \{0,1\}$,
\end{enumerate}
then
\begin{equation}
  \E\bigl[\phi_{ij}(\bxi^*, \bgamma^*)\bigr] = \theta_0,
  \label{eq:dr_id}
\end{equation}
where $\theta_0 = \E[Y_{ij}(1) - Y_{ij}(0)]$ is the ATE defined in
\eqref{eq:ate}.
\end{proposition}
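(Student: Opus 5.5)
The plan is to prove \eqref{eq:dr_id} by splitting $\phi_{ij}(\bxi^*,\bgamma^*)$ into a treated part and a control part,
\[
\phi_{ij} = \Bigl\{\tfrac{A_i}{\pi_i^*}\bigl[Y_{ij}-Q_{1ij}^*\bigr]+Q_{1ij}^*\Bigr\} - \Bigl\{\tfrac{1-A_i}{1-\pi_i^*}\bigl[Y_{ij}-Q_{0ij}^*\bigr]+Q_{0ij}^*\Bigr\},
\]
where $\pi_i^*=\pi(\bX_i;\bxi^*)$ and $Q_{aij}^*=Q_a(\bX_{ij};\bgamma_a^*)$. I will show that the first brace has mean $\E[Y_{ij}(1)]$ and the second has mean $\E[Y_{ij}(0)]$. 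By symmetry only the treated part needs to be argued. Throughout, the tool is iterated expectations conditional on the cluster covariate matrix $\bX_i$, or on $(A_i,\bX_i)$. Since $\bX_{ij}$ is a sub-block of $\bX_i$, both $\pi_i^*$ and $Q_{1ij}^*$ are $\bX_i$-measurable. Assumption~\ref{ass:overlap} makes the weights bounded, so all expectations exist given integrability of $Y_{ij}$.

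\textbf{Case (i), propensity score correct.} Rewrite the treated brace as $A_iY_{ij}/\pi_i^* + (1-A_i/\pi_i^*)Q_{1ij}^*$. First I use consistency (Assumption~\ref{ass:consistency}) to write $A_iY_{ij}=A_iY_{ij}(1)$. Then I apply exchangeability (Assumption~\ref{ass:exchangeability}), giving
\[
\E[A_iY_{ij}(1)\mid\bX_i]=\pi(\bX_i)\,\E[Y_{ij}(1)\mid\bX_i].
\]
Because $\pi_i^*=\pi(\bX_i)$, the first term has conditional mean $\E[Y_{ij}(1)\mid\bX_i]$. The second term has conditional mean $(1-\pi(\bX_i)/\pi_i^*)Q_{1ij}^*=0$ whatever $\bgamma_1^*$ is. Taking outer expectations gives $\E[Y_{ij}(1)]$.

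\textbf{Case (ii), outcome model correct.} Now $\pi_i^*$ is arbitrary but bounded away from $0$ and $1$. Conditioning on $(A_i,\bX_i)$, the residual term has mean
\[
\E\Bigl[\tfrac{A_i}{\pi_i^*}\bigl\{\E(Y_{ij}\mid A_i=1,\bX_i)-Q_1(\bX_{ij})\bigr\}\Bigr],
\]
and the remaining term has mean $\E[Q_1(\bX_{ij})]$. This is where I expect the main obstacle. The paper defines $Q_a$ conditional on $\bX_{ij}$ only, whereas exchangeability is stated conditional on the full $\bX_i$. The residual vanishes only if $\E(Y_{ij}\mid A_i=a,\bX_i)=\E(Y_{ij}\mid A_i=a,\bX_{ij})$, that is, if the other cluster members' covariates carry no additional information about $Y_{ij}$ given $(A_i,\bX_{ij})$. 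I would make this explicit: "correctly specified" in (ii) means $Q_a(\bX_{ij};\bgamma_a^*)=\E(Y_{ij}\mid A_i=a,\bX_i)$. This holds whenever $\bX_{ij}$ already contains the relevant cluster-level covariates, which is the implicit convention behind \eqref{eq:identification}.

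Under that reading the residual term is exactly zero. For the remaining term, exchangeability and consistency give $Q_1(\bX_{ij})=\E[Y_{ij}(1)\mid\bX_i]$, hence $\E[Q_1(\bX_{ij})]=\E[Y_{ij}(1)]$. This is consistent with the identification formula \eqref{eq:identification}.

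The control part is handled identically, with $1-A_i$, $1-\pi$ and $Q_0$ in place of $A_i$, $\pi$ and $Q_1$. Subtracting the two parts yields $\E[\phi_{ij}(\bxi^*,\bgamma^*)]=\E[Y_{ij}(1)-Y_{ij}(0)]=\theta_0$. Assumption~\ref{ass:sutva} is used only to guarantee that $Y_{ij}(a)$ is well defined as a function of cluster $i$'s own treatment.
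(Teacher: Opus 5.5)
Your argument is correct, and it follows a different and sounder route than the paper's in both cases.

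In case (i), the paper conditions on $(A_i,\bX_i)$ and replaces $\E[Y_{ij}\mid A_i=1,\bX_i]$ by the working $Q_{1,ij}$ to kill the residual term $\E[A_i(Y_{ij}-Q_{1,ij})/\pi_i]$. It then invokes \eqref{eq:identification} for $\E[Q_{1,ij}-Q_{0,ij}]$. Both steps silently use a correct outcome model, so the paper's argument does not actually cover a misspecified $Q_a$. Your rearrangement $A_iY_{ij}/\pi_i^*+(1-A_i/\pi_i^*)Q_{1ij}^*$ puts the identification on the pure IPW term. It then shows the augmentation has conditional mean zero given $\bX_i$ for arbitrary $\bgamma^*$, which is exactly what case (i) requires.

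In case (ii), the paper conditions on $\bX_{ij}$ alone. It nevertheless pulls $1/\pi_i$ (a function of all of $\bX_i$) outside that conditional expectation, and replaces $\Prob(A_i=1\mid\bX_{ij})$ by the working, possibly misspecified, $\pi_i$. Conditioning on $(A_i,\bX_i)$, as you do, avoids both problems. One small correction: in case (ii), boundedness of $1/\pi_i^*$ comes from Assumption~\ref{ass:smoothness}(a), not from overlap, since overlap constrains only the true propensity score.

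Your strengthened reading of (ii), namely $Q_a(\bX_{ij};\bgamma_a^*)=\E(Y_{ij}\mid A_i=a,\bX_i)$, is not a gap but a necessary repair of the statement. Take $n_i=2$, with $X_{i1},X_{i2}$ i.i.d.\ Bernoulli$(1/2)$, $\pi(\bX_i)=1/4+X_{i2}/2$ and $Y_{i1}(a)=a+X_{i2}$. Assumptions~\ref{ass:consistency}--\ref{ass:sutva} all hold and $\theta_0=1$. However, $Q_1(X_{i1})=7/4$ and $Q_0(X_{i1})=1/4$, so \eqref{eq:identification} returns $3/2$. Now use an intercept-only propensity model ($\pi_i^*\equiv 1/2$) and an outcome model that is correct in the literal sense of (ii). Then $\E[\phi_{i1}]=3/2\neq\theta_0$. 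Case (i) needs no such extra condition, exactly as your proof shows.
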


\begin{proof}
  Expand $\E[\phi_{ij}]$ under each model condition.

  \medskip
  \noindent\textit{Case (i): propensity model correct.}
  Taking the expectation of the first IPW term and using iterated expectations
  (conditioning first on $A_i, \bX_i$):
  \begin{align*}
    \E\!\left[\frac{A_i}{\pi_i}(Y_{ij} - Q_{1,ij})\right]
    &= \E\!\left[\frac{\pi_i}{\pi_i}(Q_{1,ij} - Q_{1,ij})\right] = 0.
  \end{align*}
  Thus the augmentation terms cancel and
  $\E[\phi_{ij}] = \E[Q_{1,ij}] - \E[Q_{0,ij}]$, which
  equals $\E[Y_{ij}(1)] - \E[Y_{ij}(0)] = \theta_0$ by the standard
  identification result \eqref{eq:identification} (which holds under
  Assumptions~\ref{ass:consistency}--\ref{ass:sutva}).

  \medskip
  \noindent\textit{Case (ii): outcome model correct.}
  When $Q_{a,ij} = \E(Y_{ij} \mid A_i=a, \bX_{ij})$, the IPW terms satisfy
  \begin{align*}
    \E\!\left[\frac{A_i}{\pi_i}(Y_{ij} - Q_{1,ij})\right]
    &= \E\!\left[\frac{1}{\pi_i}\E\bigl[(Y_{ij} - Q_{1,ij}) A_i \mid \bX_{ij}\bigr]\right] \\
    &= \E\!\left[\frac{1}{\pi_i} \cdot \pi_i \cdot \E[Y_{ij} - Q_{1,ij} \mid A_i=1, \bX_{ij}]\right] = 0,
  \end{align*}
  and analogously for the $A_i = 0$ term.  The remaining terms give
  $\E[Q_{1,ij} - Q_{0,ij}] = \theta_0$.
\end{proof}

\subsection{Within-Cluster Structure of Pseudo-Outcomes}

Define the pseudo-outcome vector for cluster $i$ as
$\bphi_i = (\phi_{i1}, \ldots, \phi_{in_i})^T \in \mathbb{R}^{n_i}$,
where we suppress the dependence on $(\bxi, \bgamma)$ for brevity.
The ATE $\theta_0$ satisfies
\begin{equation}
  \theta_0 \;=\; \E[\phi_{ij}] \;=\; \E\!\left[\frac{1}{n_i} \sum_{j=1}^{n_i} \phi_{ij}\right]
           \;=\; \E\!\left[\bar\phi_i\right],
\end{equation}
where $\bar\phi_i = n_i^{-1} \mathbf{1}_{n_i}^T \bphi_i$ is the cluster mean
of pseudo-outcomes.

A critical feature of CRT data is that even though $A_i$ is cluster-level,
the individual pseudo-outcomes $\phi_{ij}$ within cluster $i$ are correlated
due to both:
\begin{enumerate}
  \item \emph{Shared cluster treatment}: all individuals in cluster $i$ share
      the same $A_i$, the same $\pi_i$, and the same cluster-level covariates;
  \item \emph{Residual individual-level correlation}: the outcome residuals
      $(Y_{ij} - Q_{A_i, ij})$ are correlated within clusters due to unmeasured
      shared cluster effects.
\end{enumerate}
The within-cluster covariance of pseudo-outcomes takes the form
\begin{equation}
  \Cov(\phi_{ij}, \phi_{ij'} \mid \bX_i, A_i) =
    \frac{1}{\pi_i^{2 A_i}(1-\pi_i)^{2(1-A_i)}}
    \Cov(Y_{ij}, Y_{ij'} \mid \bX_i, A_i)
  \label{eq:phicov}
\end{equation}
for $j \neq j'$, under the assumption that residuals from the outcome model
are exchangeable within clusters (or more generally under any specified
within-cluster correlation model).  This motivates modelling the within-cluster
correlation of $\bphi_i$ using the same correlation structures used for $\bY_i$.

\subsection{Comparison: DR Estimator Based on Cluster Means Only}

A simple alternative to the individual-level pseudo-outcome approach is to
aggregate to cluster means and apply a standard DR estimator at the cluster
level.  Define $\bar{Y}_i = n_i^{-1}\sum_j Y_{ij}$ and
$\bar Q_{a,i} = n_i^{-1}\sum_j Q_a(\bX_{ij})$.  The cluster-level DR estimator
solves
\begin{equation}
  \frac{1}{N}\sum_{i=1}^N \left[
    \frac{A_i}{\pi_i}(\bar{Y}_i - \bar Q_{1,i}) + \bar Q_{1,i}
    - \frac{1-A_i}{1-\pi_i}(\bar{Y}_i - \bar Q_{0,i}) - \bar Q_{0,i}
    - \theta
  \right] = 0.
  \label{eq:clusterdr}
\end{equation}
This estimator ignores within-cluster correlation in the individual-level
outcomes, treating the $N$ cluster means as the effective sample.  In
Section~\ref{sec:theory}, we show that the proposed DR-QIF estimator, which
uses individual-level pseudo-outcomes $\phi_{ij}$ and accounts for their
within-cluster correlation via QIF basis matrices, is more efficient than
\eqref{eq:clusterdr} when individual-level covariates contribute to
heterogeneous potential outcomes.

%% file: sections/04_drqif.tex
%% Section 4: The DR-QIF Estimator
\section{The DR-QIF Estimating Equations}
\label{sec:drqif}

\subsection{Definition}

Let $\theta \in \mathbb{R}$ be the scalar ATE parameter.  Define the
DR-QIF extended score vector for cluster $i$ as
\begin{equation}
  \bm{g}_i^{\mathrm{DR}}(\theta;\, \hbxi, \hat\bgamma) =
  \begin{pmatrix}
    \bm{1}_{n_i}^T \hat\bA_i^{-1/2} \bM_1 \hat\bA_i^{-1/2}\,
        (\hbphi_i - \theta \bm{1}_{n_i}) \\[4pt]
    \bm{1}_{n_i}^T \hat\bA_i^{-1/2} \bM_2 \hat\bA_i^{-1/2}\,
        (\hbphi_i - \theta \bm{1}_{n_i}) \\[2pt]
    \vdots \\[2pt]
    \bm{1}_{n_i}^T \hat\bA_i^{-1/2} \bM_m \hat\bA_i^{-1/2}\,
        (\hbphi_i - \theta \bm{1}_{n_i})
  \end{pmatrix} \in \mathbb{R}^m,
  \label{eq:drqif_score}
\end{equation}
where $\hbphi_i = \bphi_i(\hbxi, \hat\bgamma)$ is the vector of plug-in
pseudo-outcomes, $\hat\bA_i = \mathrm{diag}\{\hat\sigma^2_{ij}\}$ is a
diagonal matrix of estimated marginal variances of $\phi_{ij}$ (e.g.,\
$\hat\sigma^2_{ij} = 1$ for a working identity variance, or estimated from
the residuals $\hphi_{ij} - \bar{\hphi}_i$), and $\bM_1, \ldots, \bM_m$
are the same basis matrices as in the standard QIF.

\begin{remark}[Role of basis matrices]
The basis matrices $\bM_r$ encode the assumed within-cluster correlation
structure of the pseudo-outcomes.  The identity $\bM_1 = \bI_{n_i}$ always
contributes a term proportional to the cluster-mean estimating equation
$\sum_j (\phi_{ij} - \theta)$; additional basis matrices leverage the
individual-level correlation among $\phi_{ij}$ to gain efficiency.  Choosing
$m = 1$ (identity only) reduces \eqref{eq:drqif_score} to the cluster-level
DR estimator \eqref{eq:clusterdr}.
\end{remark}

Let $\bar{\bm{g}}_N^{\mathrm{DR}}(\theta) = N^{-1} \sum_{i=1}^N
\bm{g}_i^{\mathrm{DR}}(\theta; \hbxi, \hat\bgamma)$ and
\begin{equation}
  \bC_N^{\mathrm{DR}}(\theta) =
  \frac{1}{N} \sum_{i=1}^N
  \bm{g}_i^{\mathrm{DR}}(\theta) \bigl[\bm{g}_i^{\mathrm{DR}}(\theta)\bigr]^T
  \label{eq:CN}
\end{equation}
be the empirical covariance matrix of the extended scores.  The
\textbf{DR-QIF estimator} is defined as
\begin{equation}
  \htheta_{\mathrm{DR\text{-}QIF}} =
    \operatorname{argmin}_{\theta \in \mathbb{R}}\;
    N \bigl[\bar{\bm{g}}_N^{\mathrm{DR}}(\theta)\bigr]^T
    \bigl[\bC_N^{\mathrm{DR}}(\theta)\bigr]^{-1}
    \bar{\bm{g}}_N^{\mathrm{DR}}(\theta).
  \label{eq:drqif_obj}
\end{equation}

\subsection{Closed-Form Solution}

Because $\theta$ enters \eqref{eq:drqif_score} linearly, the DR-QIF objective
is a quadratic function of $\theta$ and the minimizer has an explicit form.
Write $\bm{g}_i^{\mathrm{DR}}(\theta) = \bm{a}_i - \theta \bm{b}_i$, where
\begin{equation}
  \bm{a}_i = \bm{g}_i^{\mathrm{DR}}(0) =
  \begin{pmatrix}
    \bm{1}^T \hat\bA_i^{-1/2} \bM_r \hat\bA_i^{-1/2} \hbphi_i
  \end{pmatrix}_{r=1}^{m},
  \quad
  \bm{b}_i =
  \begin{pmatrix}
    \bm{1}^T \hat\bA_i^{-1/2} \bM_r \hat\bA_i^{-1/2} \bm{1}_{n_i}
  \end{pmatrix}_{r=1}^{m}.
  \label{eq:ab}
\end{equation}
Setting $\bar{\bm{a}} = N^{-1}\sum_i \bm{a}_i$ and
$\bar{\bm{b}} = N^{-1}\sum_i \bm{b}_i$, the first-order condition
$\partial Q_N^{\mathrm{DR}} / \partial \theta = 0$ reduces to
\begin{equation}
  \left(\bar{\bm{b}}^T \bC_N^{-1} \bar{\bm{a}}\right) =
  \theta \left(\bar{\bm{b}}^T \bC_N^{-1} \bar{\bm{b}}\right),
\end{equation}
giving the explicit estimator
\begin{equation}
  \boxed{
    \htheta_{\mathrm{DR\text{-}QIF}} \;=\;
    \frac{\bar{\bm{b}}^T \bC_N^{-1} \bar{\bm{a}}}{\bar{\bm{b}}^T \bC_N^{-1} \bar{\bm{b}}}.
  }
  \label{eq:drqif_closed}
\end{equation}
This is a generalized method-of-moments (GMM) estimator with $m$ moment
conditions and one parameter, and it can be computed in closed form without
any iterative optimization---a notable computational advantage over the
general multi-parameter QIF.

\begin{remark}[Interpretation as weighted average]
Expression \eqref{eq:drqif_closed} takes the form of a GMM-optimal weighted
average of $m$ marginal estimators $\hat\theta_r = \bar a_r / \bar b_r$,
where $\bar a_r$ and $\bar b_r$ are the $r$-th components of $\bar{\bm{a}}$ and
$\bar{\bm{b}}$.  Each marginal estimator $\hat\theta_r$ is the DR estimator
based on basis matrix $\bM_r$ alone.  The GMM weight $\bC_N^{-1}$ optimally
combines these across the $m$ basis matrices.  This parallels the GLS
interpretation of QIF given by \citet{yu2022}, extended to the DR
pseudo-outcome setting.
\end{remark}

\subsection{Nuisance Parameter Estimation}

The nuisance models \eqref{eq:psmodel}--\eqref{eq:outmodel} may be estimated
by:
\begin{enumerate}
  \item \textit{Propensity score}: logistic regression,
      $\mathrm{logit}[\pi(\bX_i; \bxi)] = \bX_i^T \bxi$,
      fitted by maximum likelihood.
  \item \textit{Outcome model}: generalized linear model (GLM),
      $g[Q_a(\bX_{ij}; \bgamma_a)] = \bX_{ij}^T \bgamma_a$, fitted
      separately by arm, or jointly with a treatment-covariate interaction
      term.
\end{enumerate}
In both cases, the nuisance estimates $\hbxi$ and $\hat\bgamma_a$ are
$\sqrt{N}$-consistent when models are correctly specified, and the asymptotic
theory in Section~\ref{sec:theory} accounts for the uncertainty in these
estimates.  More flexible nonparametric estimators (e.g., regression forests,
kernel smoothers, or ensemble learners) may also be used; the trade-off is
that the nuisance estimates converge at rates slower than $\sqrt{N}$, and
sample splitting (Remark~\ref{rem:splitting}) is required to preserve the
$\sqrt{N}$-rate of $\htheta_{\mathrm{DR\text{-}QIF}}$.

\subsection{Special Cases and Comparators}

Table~\ref{tab:special} summarizes the estimators nested within the DR-QIF
framework.

\begin{table}[t]
\centering
\caption{Estimators nested within the DR-QIF framework as special cases.}
\label{tab:special}
\begin{tabular}{lllc}
\toprule
Estimator & Basis matrices & Nuisance models & Doubly Robust \\
\midrule
Naive GEE               & $\bM_1,\ldots,\bM_m$ & None (no adjustment)       & No  \\
Naive QIF               & $\bM_1,\ldots,\bM_m$ & None (no adjustment)       & No  \\
IPW-GEE                 & $\bM_1$ only          & PS only ($\pi$ model)      & No  \\
IPW-QIF                 & $\bM_1,\ldots,\bM_m$ & PS only ($\pi$ model)      & No  \\
DR-GEE (cluster mean)   & $\bM_1$ only          & PS + outcome ($Q_0, Q_1$)  & Yes \\
\textbf{DR-QIF}         & $\bM_1,\ldots,\bM_m$ & PS + outcome ($Q_0, Q_1$)  & \textbf{Yes} \\
\bottomrule
\end{tabular}
\end{table}

The key relationship is that DR-QIF reduces to DR-GEE (cluster-mean version)
when $m = 1$ (only the identity basis matrix $\bM_1$), and reduces to IPW-QIF
when the outcome model $Q_a$ is omitted.  The efficiency gain of DR-QIF over
DR-GEE is precisely the gain from using $m > 1$ basis matrices, and we
characterize this analytically in the next section.

%% file: sections/05_theory.tex
%% Section 5: Asymptotic Theory
\section{Asymptotic Theory}
\label{sec:theory}

\subsection{Regularity Conditions}

Let $\bxi^*$ and $\bgamma^* = (\bgamma_0^*, \bgamma_1^*)$ denote the
probability limits of $\hbxi$ and $\hat\bgamma$ respectively; these may or
may not equal the true parameter values, depending on model specification.
Write $\phi_{ij}^* = \phi_{ij}(\bxi^*, \bgamma^*)$ for the pseudo-outcome
evaluated at probability limits, and $\theta^* = \E[\phi_{ij}^*]$.

\begin{assumption}[Consistency of nuisance estimators]
\label{ass:nuisance_consistency}
$\hbxi \pto \bxi^*$ and $\hat\bgamma \pto \bgamma^*$ as $N \to \infty$.
\end{assumption}

\begin{assumption}[Boundedness and smoothness]
\label{ass:smoothness}
\begin{enumerate}
  \item[(a)] The propensity score satisfies $\epsilon \leq \pi(\bX_i;\bxi) \leq 1-\epsilon$
      uniformly in a neighbourhood of $\bxi^*$ for some $\epsilon > 0$.
  \item[(b)] The outcome model $Q_a(\bX_{ij};\bgamma_a)$ is twice continuously
      differentiable in $\bgamma_a$ and uniformly bounded.
  \item[(c)] The cluster sizes $n_i$ are bounded: $n_i \leq n_{\max} < \infty$.
\end{enumerate}
\end{assumption}

\begin{assumption}[Non-degeneracy]
\label{ass:nondeg}
The matrices $\bLambda^* = \E[\bm{g}_i^{\mathrm{DR}}(\theta^*)
(\bm{g}_i^{\mathrm{DR}}(\theta^*))^T]$ and $\bm{B}^* = \E[\bm{b}_i \bm{b}_i^T]$
are positive definite.
\end{assumption}

\subsection{Double Robustness}

\begin{theorem}[Consistency and double robustness of DR-QIF]
\label{thm:dr_consistency}
Under Assumptions~\ref{ass:consistency}--\ref{ass:nondeg}, if either
\begin{enumerate}
  \item[(i)] $\bxi^* = \bxi_0$ (propensity score model correctly specified), or
  \item[(ii)] $\bgamma^* = \bgamma_0$ (outcome model correctly specified),
\end{enumerate}
then $\htheta_{\mathrm{DR\text{-}QIF}} \pto \theta_0$.
\end{theorem}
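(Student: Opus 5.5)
Because $\htheta_{\mathrm{DR\text{-}QIF}}$ has the closed form \eqref{eq:drqif_closed}, I will argue through continuous mapping rather than a general M-estimation consistency argument. Writing $\bm{a}_i = \bm{a}_i(\hbxi,\hat\bgamma)$, the quantity $\bar{\bm{b}}$ does not depend on nuisance estimates when $\hat\bA_i$ is fixed, for example the identity. If $\hat\bA_i$ is estimated, I will take its probability limit $\bA_i^*$ and require $\hat\bA_i^{-1/2} \to (\bA_i^*)^{-1/2}$ uniformly. This holds because the $\hat\sigma^2_{ij}$ are smooth functions of the pseudo-outcomes and $n_i \le n_{\max}$.

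\textbf{Step 1: convergence of the averaged score components.}
\begin{itemize}
\item Since each $\phi_{ij}(\bxi,\bgamma)$ is continuous in $(\bxi,\bgamma)$ and, by Assumption~\ref{ass:smoothness}(a)--(b), dominated by a bound of the form $\epsilon^{-1}(|Y_{ij}|+C)+C$ in a neighbourhood of $(\bxi^*,\bgamma^*)$, a uniform law of large numbers over a compact neighbourhood applies. Combined with Assumption~\ref{ass:nuisance_consistency}, this gives $\bar{\bm{a}} \pto \bar{\bm{a}}^* := \E[\bm{a}_i(\bxi^*,\bgamma^*)]$.
\item Alternatively, a mean-value expansion using the differentiability in Assumption~\ref{ass:smoothness}(b) and the logistic form of $\pi$ gives the same conclusion.
\item Similarly, $\bar{\bm{b}} \pto \bm{b}^* := \E[\bm{b}_i]$.
\item I will also show $\bC_N \pto \bC^*$, where $\bC^* = \E[(\bm{a}_i^*-\theta^*\bm{b}_i)(\bm{a}_i^*-\theta^*\bm{b}_i)^T]$ evaluated at the plug-in $\htheta$. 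To avoid circularity, I will note that the closed form uses $\bC_N$ evaluated at a preliminary $\theta$ (or is iterated). Any fixed $\theta$ yields a limit of the form $\bLambda^* + (\theta^*-\theta)^2\E[\bm{b}_i\bm{b}_i^T] + \text{cross terms}$. This limit is positive definite for $\theta$ near $\theta^*$ by Assumption~\ref{ass:nondeg}, and in fact the argument only needs some positive definite limit $\bm{W}$.
\end{itemize}

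\textbf{Step 2: identifying the limit.} The continuous mapping theorem gives $\htheta \pto \bm{b}^{*T}\bm{W}^{-1}\bar{\bm{a}}^* / \bm{b}^{*T}\bm{W}^{-1}\bm{b}^*$, with a nonzero denominator because $\bm{b}^*\neq 0$ (its first component is $\E[\sum_j \sigma_{ij}^{-2}]>0$). It therefore suffices to show the moment identity $\bar{\bm{a}}^* = \theta_0\bm{b}^*$. Once this holds, the ratio equals $\theta_0$ for every positive definite weight $\bm{W}$.

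Component $r$ of this identity reads
\[
\E\Bigl[\sum_{j,k} w_{r,i,jk}\,(\phi^*_{ik}-\theta_0)\Bigr]=0,
\qquad w_{r,i,jk} = \bigl[(\bA_i^*)^{-1/2}\bM_r(\bA_i^*)^{-1/2}\bigr]_{jk}.
\]
Proposition~\ref{prop:dr_mean} gives only $\E[\phi^*_{ik}]=\theta_0$, which is not enough here because the weights are random. I will strengthen the argument by conditioning, following the two cases of Proposition~\ref{prop:dr_mean}.

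\textbf{Step 3: the two cases, and the main obstacle.}
\begin{itemize}
\item In case (ii), conditional on $(\bX_i,A_i)$ the augmentation residuals have mean zero. This requires the outcome model to hold given the full $\bX_i$, i.e. $\E(Y_{ij}\mid A_i,\bX_i)=\E(Y_{ij}\mid A_i,\bX_{ij})$, which I will state as an implicit condition. Then $\E[\phi^*_{ik}\mid \bX_i] = Q_{1,ik}-Q_{0,ik}$.
\item In case (i), conditioning on $\bX_i$ and using Assumption~\ref{ass:exchangeability} gives $\E[\phi^*_{ik}\mid\bX_i] = \E[Y_{ik}(1)-Y_{ik}(0)\mid \bX_i]$.
\end{itemize}

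This is the main obstacle. In either case $\E[\phi^*_{ik}\mid\bX_i]$ is a conditional treatment effect $\tau(\bX_i)_k$, not $\theta_0$. So the identity holds only if the weights $w_{r,i,jk}$ are uncorrelated with $\tau_k$, in particular if they are functions of $n_i$ alone and $\tau$ is mean-independent of $n_i$, or if the effect is homogeneous. I would first use $\hat\bA_i=\bI$, so that the weights depend only on $n_i$ through $\bM_r$. I would then state and invoke a mild condition: $\E[\tau_{ik}\mid n_i]=\theta_0$ (non-informative cluster size) and the $\hat\sigma^2_{ij}$ are constant within clusters or independent of $\tau$. Under this condition $\E[w\,\phi^*]=\E[w]\theta_0$ and the theorem follows. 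If the condition fails, the estimator converges to a weighted ATE rather than $\theta_0$, and I expect the argument to need exactly this extra assumption.
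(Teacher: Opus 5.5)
\textbf{Verdict.} Your route is the paper's, and you stop at exactly the step where the paper's proof fails; that step cannot be filled, because the theorem as stated is false.

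\textbf{Where the paper's Step~2 fails.} Your skeleton matches the appendix proof: the closed form, LLN plus continuous mapping for $\bar{\bm a}$, $\bar{\bm b}$ and $\bC_N$, then the identity $\E[\bm a_i(\bxi^*,\bgamma^*)]=\theta_0\bm b^*$ via the tower property. The paper closes Step~2 by asserting that Proposition~\ref{prop:dr_mean} gives $\E[\phi^*_{ij'}\mid\bX_i,A_i]=\theta_0$ for every $j'$.

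Your diagnosis is right: Proposition~\ref{prop:dr_mean} is unconditional, and the conditional claim is false.
\begin{itemize}
\item Under (ii) the conditional mean is $Q_{1,ij'}-Q_{0,ij'}$, not $\theta_0$.
\item Under (i), conditioning on $A_i$ does not even give a conditional effect: $\E[\phi^*_{ij'}\mid\bX_i,A_i=1]=\{\E(Y_{ij'}\mid\bX_i,A_i=1)-Q^*_{1,ij'}\}/\pi_i+Q^*_{1,ij'}-Q^*_{0,ij'}$. Only averaging over $A_i$ given $\bX_i$ yields $\tau_{j'}(\bX_i)$. So the paper's allowance of $\sigma(\bX_i,A_i)$-measurable weights is too permissive in case (i).
\item Your strengthened condition $\E(Y_{ij}\mid A_i,\bX_i)=\E(Y_{ij}\mid A_i,\bX_{ij})$ is also needed. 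Case (ii) of Proposition~\ref{prop:dr_mean} pulls $1/\pi_i$, a function of $\bX_i$, out of an expectation conditional on $\bX_{ij}$ alone.
\end{itemize}

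\textbf{Counterexample.} Your extra assumption is necessary, not an artefact of your route. Take a randomized CRT with the following ingredients:
\begin{itemize}
\item $\pi_i\equiv1/2$, and $n_i\in\{2,4\}$ equiprobable, with $n_i$ included as a cluster-level covariate;
\item $Y_{ij}(0)=0$ and $Y_{ij}(1)=\tau(n_i)$, with $\tau(2)=0$ and $\tau(4)=1$;
\item both nuisance models correct, $\hat\bA_i=\bI$, $\bM_1=\bI$ and $\bM_2=\bm{J}-\bI$.
\end{itemize}
Then $\phi^*_{ij}=\tau(n_i)$ and $\bm a_i=\tau(n_i)\bm b_i$. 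So $\E[\bm a_i]=(2,6)^T$, which is not a multiple of $\bm b^*=(3,7)^T$; the two moment conditions are solved by $2/3$ and $6/7$ respectively. The population-averaged remark in Section~\ref{sec:setup} designates the individual-weighted ATE, $2/3$, as $\theta_0$, and DR-GEE converges to it. But with $\bC_N$ evaluated at $\theta^*=2/3$ (as with the pilot of Algorithm~S2), the closed form converges to $4/5$. All of Assumptions~\ref{ass:consistency}--\ref{ass:nondeg} hold in this example.

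\textbf{Refinement 1: the condition on the weights.} Drop your opening claim that residual-based $\hat\sigma^2_{ij}$ have a usable limit $\bA_i^*$. Also replace ``constant within clusters or independent of $\tau$'' with a measurability condition.
\begin{itemize}
\item Residual-based weights depend on $(A_i,\bY_i)$. In case (i), dependence on $A_i$ alone can break the identity even under a homogeneous effect, because $\E[\phi^*_{ij}\mid\bX_i,A_i]$ differs by arm.
\item A cluster-constant random scale $\hat\sigma_i^{-2}$ that is correlated with $\bar\phi_i$ can be equally harmful.
\item What works: weights that are $\sigma(\bX_i)$-measurable in case (i), or $\sigma(\bX_i,A_i)$-measurable in case (ii), and that satisfy $\E[\sum_{j,k}w_{r,i,jk}\tau_k(\bX_i)]=\theta_0\,\E[\sum_{j,k}w_{r,i,jk}]$ for each $r$.
\item Your version is a clean sufficient condition: weights depending only on $n_i$ and position, with $\E[\tau_k(\bX_i)\mid n_i]=\theta_0$ for every position $k$. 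The position-wise form matters for the AR(1) basis, where $\bm 1^T\bM_2$ varies across positions.
\end{itemize}

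\textbf{Refinement 2: points where your argument improves on the paper.} You observe that any positive definite limit $\bm W$ of $\bC_N$ suffices once the moment identity holds. This repairs the paper's unexplained $\bC_N\pto\bLambda^*$, which ignores that $\bC_N$ depends on $\theta$. Likewise, consistency of the nuisance estimators is enough for Step~1. The paper instead invokes $O_p(N^{-1/2})$ rates that Assumption~\ref{ass:nuisance_consistency} does not provide.
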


\begin{proof}[Proof sketch]
By the law of large numbers and Assumption~\ref{ass:nuisance_consistency},
$\bar{\bm{a}} \pto \E[\bm{a}_i(\bxi^*, \bgamma^*)]$ and
$\bar{\bm{b}} \pto \E[\bm{b}_i]$, and $\bC_N^{-1} \pto (\bLambda^*)^{-1}$.
From \eqref{eq:drqif_closed},
\[
  \htheta_{\mathrm{DR\text{-}QIF}} \pto
  \frac{\E[\bm{b}_i]^T (\bLambda^*)^{-1} \E[\bm{a}_i(\bxi^*,\bgamma^*)]}
       {\E[\bm{b}_i]^T (\bLambda^*)^{-1} \E[\bm{b}_i]}.
\]
The $r$-th component of $\E[\bm{a}_i]$ is
$\E[\bm{1}^T \hat\bA_i^{-1/2} \bM_r \hat\bA_i^{-1/2} \bphi_i^*]
= \E[\phi_{ij}^*] \cdot \E[\bm{1}^T \hat\bA_i^{-1/2} \bM_r \hat\bA_i^{-1/2} \bm{1}]$
when pseudo-outcomes are exchangeable within clusters.  By
Proposition~\ref{prop:dr_mean}, under either (i) or (ii),
$\E[\phi_{ij}^*] = \theta_0$, so $\E[\bm{a}_i] = \theta_0 \E[\bm{b}_i]$
and the probability limit equals $\theta_0$.  A rigorous proof accounting for
non-exchangeable within-cluster structures and estimated $\hat\bA_i$ is
given in Appendix~\ref{app:proofs}.
\end{proof}

\subsection{Asymptotic Normality}

\begin{theorem}[Asymptotic normality of DR-QIF]
\label{thm:normality}
Under Assumptions~\ref{ass:consistency}--\ref{ass:nondeg} and the correct
specification of at least one nuisance model,
\begin{equation}
  \sqrt{N}\bigl(\htheta_{\mathrm{DR\text{-}QIF}} - \theta_0\bigr)
  \;\dto\; \mathcal{N}(0,\; \sigma^2_{\mathrm{DR\text{-}QIF}}),
  \label{eq:clt}
\end{equation}
where
\begin{equation}
  \sigma^2_{\mathrm{DR\text{-}QIF}} \;=\;
  \frac{\bm{b}^{*T} \bLambda^{*-1} \bSigma^* \bLambda^{*-1} \bm{b}^*}
       {(\bm{b}^{*T} \bLambda^{*-1} \bm{b}^*)^2},
  \label{eq:avar}
\end{equation}
with $\bm{b}^* = \E[\bm{b}_i]$, $\bLambda^* = \E[\bm{g}_i^{\mathrm{DR}}
(\theta_0)(\bm{g}_i^{\mathrm{DR}}(\theta_0))^T]$, and
$\bSigma^* = \E[\bm{\psi}_i \bm{\psi}_i^T]$ where $\bm{\psi}_i$ is the
influence function of $\bar{\bm{g}}_N^{\mathrm{DR}}(\theta_0)$ accounting
for estimated nuisance parameters.

The influence function is
\begin{equation}
  \bm{\psi}_i = \bm{g}_i^{\mathrm{DR}}(\theta_0;\bxi_0,\bgamma_0)
    + \bJ_\xi \cdot s_i^{(\xi)} + \bJ_\gamma \cdot s_i^{(\gamma)},
  \label{eq:influence}
\end{equation}
where $s_i^{(\xi)}$ and $s_i^{(\gamma)}$ are the score contributions of
cluster $i$ for the propensity and outcome models respectively, and
$\bJ_\xi, \bJ_\gamma$ are Jacobian terms given explicitly in
Appendix~\ref{app:proofs}.  When both nuisance models are correctly specified,
$\bm{\psi}_i$ reduces to $\bm{g}_i^{\mathrm{DR}}(\theta_0;\bxi_0,\bgamma_0)$,
and $\bSigma^* = \bLambda^*$, so
$\sigma^2_{\mathrm{DR\text{-}QIF}} = (\bm{b}^{*T} \bLambda^{*-1} \bm{b}^*)^{-1}$.
\end{theorem}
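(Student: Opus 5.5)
The plan is to reduce everything to a linear expansion of $\bar{\bm{g}}_N^{\mathrm{DR}}(\theta_0)$, exploiting the closed form \eqref{eq:drqif_closed}. Since $\bar{\bm{a}} - \theta_0\bar{\bm{b}} = \bar{\bm{g}}_N^{\mathrm{DR}}(\theta_0;\hbxi,\hat\bgamma)$, we have the exact identity
\[
  \sqrt{N}\bigl(\htheta_{\mathrm{DR\text{-}QIF}} - \theta_0\bigr)
  = \frac{\bar{\bm{b}}^T \bC_N^{-1}\, \sqrt{N}\,\bar{\bm{g}}_N^{\mathrm{DR}}(\theta_0;\hbxi,\hat\bgamma)}{\bar{\bm{b}}^T \bC_N^{-1} \bar{\bm{b}}}.
\]
I will therefore proceed in three steps.

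\textbf{Step 1: the outer factors.} By the law of large numbers and Assumption~\ref{ass:smoothness}(c), $\bar{\bm{b}} \pto \bm{b}^*$. By Theorem~\ref{thm:dr_consistency} together with Assumption~\ref{ass:nuisance_consistency}, $\bC_N$, which is evaluated at $\htheta$ and the estimated nuisances, converges in probability to $\bLambda^*$. This uses a uniform law of large numbers over a neighbourhood of $(\theta_0,\bxi^*,\bgamma^*)$, justified by the boundedness in Assumption~\ref{ass:smoothness}. Positive definiteness from Assumption~\ref{ass:nondeg} then gives $\bC_N^{-1}\pto\bLambda^{*-1}$ and a denominator bounded away from zero.

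\textbf{Step 2: expansion in the nuisance parameters.} I would treat $\hbxi$ and $\hat\bgamma$ as M-estimators, so that
\[
  \sqrt{N}(\hbxi-\bxi^*) = N^{-1/2}\textstyle\sum_i \bm{H}_\xi^{-1} s_i^{(\xi)} + o_p(1),
\]
and similarly for $\hat\bgamma$. A first-order Taylor expansion of $\bar{\bm{g}}_N^{\mathrm{DR}}(\theta_0;\bxi,\bgamma)$ around $(\bxi^*,\bgamma^*)$ is valid because $\phi_{ij}$ is smooth in $(\bxi,\bgamma)$ on the set where $\pi\in[\epsilon,1-\epsilon]$ and $Q_a$ is $C^2$ and bounded. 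The remainder is $O_p(N^{-1})$, since the second derivatives are uniformly bounded. This expansion yields
\[
  \sqrt{N}\,\bar{\bm{g}}_N^{\mathrm{DR}}(\theta_0;\hbxi,\hat\bgamma) = N^{-1/2}\textstyle\sum_i \bm{\psi}_i + o_p(1).
\]
Here $\bm{\psi}_i$ has the form \eqref{eq:influence}, with $\bJ_\xi = \E[\partial \bm{g}_i/\partial\bxi^T]\,\bm{H}_\xi^{-1}$ and analogously for $\bJ_\gamma$. The $\psi_i$ are i.i.d.\ across clusters, so the multivariate CLT gives $N^{-1/2}\sum_i\bm{\psi}_i\dto\mathcal{N}(\bm 0,\bSigma^*)$. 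Slutsky's lemma combined with the identity above then yields \eqref{eq:clt} with variance \eqref{eq:avar}.

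\textbf{Step 3: both models correct.} Here I would show $\bJ_\xi = \bJ_\gamma = \bm 0$. Differentiating the AIPW form in $\bxi$ gives terms proportional to $A_i(Y_{ij}-Q_{1,ij})$ times a function of $\bX_i$, which have conditional mean zero when $Q_a$ is correct. Differentiating in $\bgamma_a$ gives terms of the form $(1 - A_i/\pi_i)\,\partial Q_1/\partial\bgamma_1$, which have mean zero when $\pi$ is correct. The entries of $\bJ$ are these derivatives weighted by $\bm{1}^T\hat\bA_i^{-1/2}\bM_r\hat\bA_i^{-1/2}$. For these weights to preserve the zero means, they must be functions of $\bX_i$ only, for instance working identity variances. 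Hence $\bm{\psi}_i = \bm{g}_i^{\mathrm{DR}}$ and $\bSigma^*=\bLambda^*$, and the variance collapses to $(\bm{b}^{*T}\bLambda^{*-1}\bm{b}^*)^{-1}$.

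\textbf{Main obstacle.} I expect the hardest part to be showing that $\E[\bm{g}_i^{\mathrm{DR}}(\theta_0;\bxi^*,\bgamma^*)]=\bm 0$ componentwise, which the CLT centring requires. Proposition~\ref{prop:dr_mean} only gives $\E[\phi_{ij}^*]=\theta_0$ marginally. Each component instead involves cross-weighted sums $\sum_{j,j'}w_{r,jj'}(\phi_{ij'}^*-\theta_0)$ with cluster-dependent weights. I would first try conditioning on $(\bX_i,n_i)$ and using the AIPW structure, so that $\E[\phi^*_{ij}-\theta_0\mid \bX_i,n_i]$ reduces to the conditional CATE deviation. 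I would then argue, as in the appendix argument behind Theorem~\ref{thm:dr_consistency}, that the weights are ancillary. Concretely, this means they depend only on $n_i$ with $\hat\bA_i$ a function of $n_i$, and that the treatment-effect deviations are exchangeable or uncorrelated with those weights. If this cannot be established in general, I would state it as the required identification condition, under which the above steps go through verbatim.
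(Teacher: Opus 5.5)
Your argument has the same skeleton as the paper's proof: the exact identity $\htheta-\theta_0=D_N^{-1}\bar{\bm{b}}_N^T\bC_N^{-1}\bar{\bm{g}}_N(\theta_0)$, a first-order expansion of $\bar{\bm{g}}_N$ in $(\bxi,\bgamma)$ with M-estimator linearisations to produce $\bm{\psi}_i$, a CLT, and Slutsky.

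\textbf{The centring step is a genuine gap, but in the statement, not in your reasoning.} Componentwise centring $\E[\bm{g}_i^{\mathrm{DR}}(\theta_0;\bxi^*,\bgamma^*)]=\bm{0}$ does not follow from Assumptions~\ref{ass:consistency}--\ref{ass:nondeg}. The paper's Step~3(a) takes it from Step~2 of the proof of Theorem~\ref{thm:dr_consistency}. That step asserts $\E[\phi^*_{ij'}\mid\bX_i,A_i]=\theta_0$ ``by Proposition~\ref{prop:dr_mean}''. But the proposition gives only the marginal mean. With $\bX_i$-measurable weights, the correct conditional statement is the one you anticipated: $\E[\phi^*_{ij'}\mid\bX_i]=\tau_{j'}(\bX_i)$, the conditional effect at position $j'$. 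If you also condition on $A_i$, the mean is not even free of $A_i$ when only the propensity model is correct.

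For a concrete failure, take cluster vectors of length $3$ (one subject at $T=3$ times), with $\hat\bA_i=\bI$, $\bM_1=\bI_3$, and $\bM_2$ the AR(1) adjacency matrix. Then $\bm{1}^T\bM_2=(1,2,1)$ and $\bm{b}_i=(3,4)^T$. With both nuisance models correct and time-specific effects $\tau_t$, the two moments identify $(\tau_1+\tau_2+\tau_3)/3$ and $(\tau_1+2\tau_2+\tau_3)/4$. These differ unless $2\tau_2=\tau_1+\tau_3$, so no single $\theta_0$ centres both moments and $\htheta$ is generically inconsistent.

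Your fallback is therefore the right move. Impose
\[
  \E\Bigl[\textstyle\sum_{j'} w^{(r)}_{ij'}\bigl\{\tau_{j'}(\bX_i)-\theta_0\bigr\}\Bigr]=0,
  \qquad r=1,\ldots,m,
\]
with $w^{(r)}_{ij'}=[\hat\bA_i^{-1/2}\bM_r\hat\bA_i^{-1/2}\bm{1}]_{j'}$. This holds for a constant effect, as in the paper's simulation designs. It also holds when the weights depend only on $(n_i,j')$ and $\E[\tau_{j'}(\bX_i)\mid n_i]=\theta_0$ for every $j'$. The consistency of $\htheta$ that you import from Theorem~\ref{thm:dr_consistency} in Step~1 rests on the same claim, so the condition must be in force there too.

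\textbf{Your Step~3 is where you depart from the paper, and your version is the one that works.} You show $\bJ_\xi=\bJ_\gamma=\bm{0}$ via the usual cross-orthogonality of AIPW. The $\bxi$-derivative has mean zero because the outcome model is correct, and the $\bgamma$-derivative has mean zero because the propensity model is correct.

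The paper instead cites Lemma~\ref{lem:orth}. That lemma pairs each Jacobian with its own nuisance model and asserts that its cross-moment with the corresponding score vanishes. What the efficiency proof then uses is only $\E[\bJ_\xi s_i^{(\xi)}]=\bm{0}$. That is automatic from $\E[s_i^{(\xi)}]=\bm{0}$ and does not make the term vanish, so it does not yield $\bSigma^*=\bLambda^*$.

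Two points should be made explicit in your version:
\begin{enumerate}
  \item $\bX_i$-measurability of the weights is necessary, not merely convenient. The paper's $\sigma(\bX_i,A_i)$-measurability, or a residual-based $\hat\bA_i$, breaks orthogonality, since $\E[w(A_i,\bX_i)(1-A_i/\pi_i)\mid\bX_i]=(1-\pi_i)\{w(0,\bX_i)-w(1,\bX_i)\}$.
  \item Both the centring and the $\bxi$-orthogonality need the outcome model to be correct given the whole cluster, $\E(Y_{ij}\mid A_i=a,\bX_i)=Q_a(\bX_{ij};\bgamma_a^*)$. This is stronger than the paper's definition, which conditions on $\bX_{ij}$ alone.
\end{enumerate}
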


\subsection{Efficiency Comparison}

\begin{theorem}[DR-QIF is at least as efficient as DR-GEE]
\label{thm:efficiency}
When both nuisance models are correctly specified and $m > 1$, if the working
correlation is misspecified then
\begin{equation}
  \sigma^2_{\mathrm{DR\text{-}GEE}} \;\geq\; \sigma^2_{\mathrm{DR\text{-}QIF}},
  \label{eq:efficiency}
\end{equation}
with equality if and only if the working correlation is correctly specified or
$m = 1$.
\end{theorem}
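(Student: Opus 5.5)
The plan is to cast both estimators as members of one family of linear moment estimators built on the extended score $\bm{g}_i^{\mathrm{DR}}(\theta)=\bm{a}_i-\theta\bm{b}_i$ of \eqref{eq:drqif_score}. Then \eqref{eq:efficiency} becomes a Cauchy--Schwarz (Gauss--Markov-type) inequality for GMM.

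First I will express DR-GEE in this language. For a working correlation $\bR(\balpha)$ whose inverse lies in, or is approximated by, the span of the basis as in \eqref{eq:basis}, the DR-GEE estimating equation is
\[
\sum_i \bm{1}^T\hat\bA_i^{-1/2}\bR^{-1}\hat\bA_i^{-1/2}(\hbphi_i-\theta\bm{1})=0 .
\]
This equals $\sum_i \bm{c}^T\bm{g}_i^{\mathrm{DR}}(\theta)=0$ for a fixed vector $\bm{c}=(\alpha_1,\ldots,\alpha_m)^T$. The cluster-mean version in Table~\ref{tab:special} corresponds to $\bm{c}=\bm{e}_1$. Hence $\htheta_{\mathrm{DR\text{-}GEE}}=\bm{c}^T\bar{\bm{a}}/\bm{c}^T\bar{\bm{b}}$. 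I will then take a first-order expansion, exactly as in the proof of Theorem~\ref{thm:normality}.

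Both nuisance models are correctly specified, so by Theorem~\ref{thm:normality} the influence function reduces to $\bm{g}_i^{\mathrm{DR}}(\theta_0;\bxi_0,\bgamma_0)$ and $\bSigma^*=\bLambda^*$. The nuisance-estimation Jacobians vanish by the usual DR orthogonality. Under this reduction,
\[
\sigma^2_{\mathrm{DR\text{-}GEE}}=\frac{\bm{c}^T\bLambda^*\bm{c}}{(\bm{c}^T\bm{b}^*)^2},
\qquad
\sigma^2_{\mathrm{DR\text{-}QIF}}=(\bm{b}^{*T}\bLambda^{*-1}\bm{b}^*)^{-1},
\]
where the second expression is quoted from Theorem~\ref{thm:normality}.

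The inequality is then Cauchy--Schwarz in the inner product induced by $\bLambda^*$, which is positive definite by Assumption~\ref{ass:nondeg}:
\[
(\bm{c}^T\bm{b}^*)^2=\bigl((\bLambda^{*1/2}\bm{c})^T(\bLambda^{*-1/2}\bm{b}^*)\bigr)^2\le(\bm{c}^T\bLambda^*\bm{c})(\bm{b}^{*T}\bLambda^{*-1}\bm{b}^*).
\]
Rearranging gives \eqref{eq:efficiency}, with equality iff $\bm{c}\propto\bLambda^{*-1}\bm{b}^*$.

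The main obstacle is translating this algebraic equality condition into the statement's ``iff the working correlation is correct or $m=1$''.

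If $m=1$, then $\bm{c}$ is a scalar and equality is trivial.

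If the working correlation equals the true correlation of $\bphi_i$, I will try to show $\bLambda^*\bm{c}\propto\bm{b}^*$ directly. Write $\bm{g}_i=\bm{H}_i^T\bm{\varepsilon}_i$, where $\bm{H}_i$ has columns $\hat\bA_i^{-1/2}\bM_r\hat\bA_i^{-1/2}\bm{1}$ and $\bm{\varepsilon}_i=\bphi_i-\theta_0\bm{1}$. Take $\Var(\bm{\varepsilon}_i\mid\bX_i)=\bA_i^{1/2}\bR\bA_i^{1/2}$ with $\bR^{-1}=\sum_r c_r\bM_r$. Then
\[
\bLambda^*\bm{c}=\E\bigl[\bm{H}_i^T\bA_i^{1/2}\bR\bA_i^{1/2}\bA_i^{-1/2}\bR^{-1}\bA_i^{-1/2}\bm{1}\bigr]=\E[\bm{H}_i^T\bm{1}]=\bm{b}^*,
\]
which uses that each $\bM_r$ is symmetric.

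For the converse I would argue that, under misspecification, $\bLambda^{*-1}\bm{b}^*$ generically fails to be proportional to $\bm{c}$. This direction is delicate. Section~1 concedes that in cross-sectional CRTs with the exchangeable basis the two estimators coincide even when $\bR$ is wrong, because $\bm{b}_i$ and $\bm{a}_i$ collapse to functions of $\bm{1}^T\bphi_i$ when $\hat\bA_i\propto\bI$ and cluster sizes are equal. So I expect to need an additional nondegeneracy condition for the strict part, namely that $\bLambda^{*-1}\bm{b}^*$ is not collinear with $\bm{c}$. I would state the ``only if'' direction under that condition rather than claim it unconditionally.
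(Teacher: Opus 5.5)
Your inequality argument is the paper's. Under CC the nuisance Jacobians drop out, $\bSigma^*=\bLambda^*$, and Cauchy--Schwarz in the $\bLambda^*$ inner product finishes the proof. The paper treats only $\bm{c}=\bm{e}_1$: its choice $\bm{u}=\bm{b}^*$, $\bm{v}=\bLambda^*\bm{e}_1$ gives equality iff $\bm{b}^*\propto\bLambda^*\bm{e}_1$, which is your condition at $\bm{c}=\bm{e}_1$. Your general $\bm{c}$ also covers DR-GEE with a non-independence working correlation, but only when $\bR^{-1}$ lies exactly in $\mathrm{span}\{\bM_r\}$. If $\bR^{-1}$ is merely approximated by that span, the DR-GEE is not of the form $\bm{c}^T\bm{g}_i^{\mathrm{DR}}$ and need not be dominated. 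An estimated $\balpha$ is harmless, since by Slutsky only consistency of $\hat{\bm{c}}$ matters.

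The real divergence is the equality clause, and your caution is warranted. After $\bm{b}^*\propto\bLambda^*\bm{e}_1$ the paper reaches ``working correlation correctly specified'' with a bare ``i.e.'', and the ``only if'' direction is false within the theorem's own hypotheses. Your cross-sectional example does not show this, though. With equal $n_i$ and $\hat\bA_i\propto\bI$, every $\bm{g}_i^{\mathrm{DR}}$ is a multiple of $(1,n_i-1)^T$, so $\bLambda^*$ has rank one and Assumption~\ref{ass:nondeg} fails. A valid witness is the following.
Take a longitudinal design with equal cluster sizes and $T\ge 3$, $\hat\bA_i=\bI$, $\bM_2$ the within-individual AR(1) adjacency, and $\bm{c}=\bm{e}_1$ (working independence). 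Let $\E[\bm{\varepsilon}_i\bm{\varepsilon}_i^T\mid\bX_i]=\bm{S}$, a fixed exchangeable matrix with $\rho\neq0$. Then $\bm{S}\bm{1}=\kappa\bm{1}$, so $\bLambda^*\bm{e}_1=[\bm{1},\bM_2\bm{1}]^T\bm{S}\bm{1}=\kappa\bm{b}^*$, and equality holds despite the misspecification. Meanwhile $\bLambda^*=[\bm{1},\bM_2\bm{1}]^T\bm{S}[\bm{1},\bM_2\bm{1}]$ is positive definite, because $\bM_2\bm{1}$ has entries $1$ and $2$ and is not proportional to $\bm{1}$.

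So the correct formulation is the one you propose: strict inequality iff $\bm{c}\not\propto\bLambda^{*-1}\bm{b}^*$. In your ``if'' computation, make three tacit assumptions explicit:
\begin{itemize}
\item the limit of $\hat\bA_i$ must be the true variance matrix $\bA_i$, otherwise the $\bA_i^{1/2}\cdots\bA_i^{-1/2}$ cancellation fails;
\item $\bm{H}_i$ must be measurable with respect to the conditioning $\sigma$-field;
\item the matrix you call $\Var(\bm{\varepsilon}_i\mid\bX_i)$ must be the second moment $\E[\bm{\varepsilon}_i\bm{\varepsilon}_i^T\mid\bX_i]$, because $\E[\bm{\varepsilon}_i\mid\bX_i]$ is the vector of conditional effects minus $\theta_0\bm{1}$ and need not vanish.
\end{itemize}
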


\begin{proof}[Proof sketch]
Under correct specification of both nuisance models and correct specification
of the cluster-treatment propensity score, the pseudo-outcomes $\phi_{ij}$
are unbiased for the individual-level potential outcome contrast. The DR-GEE
estimator corresponds to $m = 1$ and uses only $\bM_1 = \bI$, yielding
asymptotic variance
\[
  \sigma^2_{\mathrm{DR\text{-}GEE}} = \frac{\E[b_{1i}^2 \lambda_{1i}^{-2}
  \sigma^2_{\phi_i}]}{\E[b_{1i}]^2}
\]
where $\lambda_{1i} = \E[\bm{g}_{1i}^2]$ and $\sigma^2_{\phi_i}$ is the
within-cluster variance of $\phi_{ij}$.  The DR-QIF estimator with $m > 1$
achieves the optimal GMM combination of all $m$ moment conditions.  By the
Gauss--Markov argument for GMM \citep{hansen1982, qu2000}, the GMM variance
$\sigma^2_{\mathrm{DR\text{-}QIF}} \leq \sigma^2_{\mathrm{DR\text{-}GEE}}$,
with equality only when all $m$ moment conditions are proportional, which
occurs exactly when the working correlation is correctly specified.  Full
proof in Appendix~\ref{app:proofs}.
\end{proof}

\begin{remark}[Efficiency under one correct nuisance model]
When only one nuisance model is correctly specified (the double-robustness
setting), the pseudo-outcomes $\phi_{ij}$ are still centered at $\theta_0$
but their within-cluster covariance structure is more complex (involving
the misspecified model's residuals).  Theorem~\ref{thm:efficiency} extends
to this case under mild regularity conditions, as shown in
Appendix~\ref{app:proofs}.
\end{remark}

\begin{proposition}[Collinearity of exchangeable basis in cross-sectional CRT]
\label{prop:collinearity}
In a cross-sectional CRT with cluster-level treatment assignment, let
$\bM_1 = \bI_{n_i}$ and $\bM_2 = \bm{J}_{n_i} - \bI_{n_i}$ (standard
exchangeable basis).  Then for every cluster $i$ and every vector
$\bm{v} \in \mathbb{R}^{n_i}$,
\begin{equation}
  \bm{1}_{n_i}^\top \bM_2\, \bm{v} \;=\; (n_i - 1)\;
  \bm{1}_{n_i}^\top \bM_1\, \bm{v}.
  \label{eq:collinearity}
\end{equation}
Consequently $a_i^{(2)} = (n_i - 1)\, a_i^{(1)}$ and
$b_i^{(2)} = (n_i - 1)\, b_i^{(1)}$ for all $i$, so the two moment
conditions in \eqref{eq:drqif_score} are proportional.  The DR-QIF
estimator \eqref{eq:drqif_closed} reduces algebraically to DR-GEE
(cluster-mean estimator), giving efficiency ratio
$\sigma^2_{\mathrm{DR\text{-}GEE}} / \sigma^2_{\mathrm{DR\text{-}QIF}} = 1.000$ exactly.
\end{proposition}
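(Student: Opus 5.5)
The plan is to prove the algebraic identity \eqref{eq:collinearity} directly, push it through the definitions in \eqref{eq:ab}, and then show that the closed form \eqref{eq:drqif_closed} no longer depends on the GMM weight once the moment vectors are collinear. The identity itself is immediate. Since $\bm{1}_{n_i}^\top \bm{J}_{n_i} = n_i \bm{1}_{n_i}^\top$, we have $\bm{1}_{n_i}^\top(\bm{J}_{n_i}-\bI_{n_i})\bm{v} = (n_i-1)\bm{1}_{n_i}^\top\bm{v} = (n_i-1)\bm{1}_{n_i}^\top \bM_1 \bm{v}$ for every $\bm{v}$.

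\textbf{Step 1: carry the identity through the scores.} The components of $\bm{a}_i$ and $\bm{b}_i$ are $\bm{1}^\top\hat\bA_i^{-1/2}\bM_r\hat\bA_i^{-1/2}\bm{v}$, so the identity must survive the sandwiching by $\hat\bA_i^{-1/2}$. This holds whenever $\hat\bA_i = c_i\bI_{n_i}$, which covers the working identity variance $\hat\sigma^2_{ij}=1$ and any variance estimate common within a cluster. In that case $\hat\bA_i^{-1/2}$ commutes with $\bM_r$, and applying the identity with $\bm{v}=\hat\bA_i^{-1}\hbphi_i$ and $\bm{v}=\hat\bA_i^{-1}\bm{1}$ gives $a_i^{(2)}=(n_i-1)a_i^{(1)}$ and $b_i^{(2)}=(n_i-1)b_i^{(1)}$. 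I will therefore state explicitly that the proposition is read under a within-cluster homogeneous working variance, as is implicit in the cross-sectional setup. With heterogeneous $\hat\sigma^2_{ij}$, the quantity $\bm{1}^\top\hat\bA_i^{-1/2}\bm{J}\hat\bA_i^{-1/2}\bm{v}$ equals $(\sum_j\hat\sigma_{ij}^{-1})(\sum_j \hat\sigma_{ij}^{-1}v_j)$, which is not proportional to $\sum_j\hat\sigma_{ij}^{-2}v_j$.

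\textbf{Step 2: collapse the closed form.} It follows that $\bm{g}_i^{\mathrm{DR}}(\theta) = g_i^{(1)}(\theta)\,\bm{u}_i$ with $\bm{u}_i=(1,n_i-1)^\top$. If all clusters share a common size $n$, then $\bar{\bm{a}}=\bar a^{(1)}\bm{u}$ and $\bar{\bm{b}}=\bar b^{(1)}\bm{u}$. For any weight matrix $\bm{W}$ with $\bm{u}^\top\bm{W}\bm{u}\neq0$, the ratio in \eqref{eq:drqif_closed} becomes $\bar a^{(1)}/\bar b^{(1)}$, which is the $m=1$ cluster-mean estimator \eqref{eq:clusterdr}. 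The same ratio is obtained with an identity working variance, because $b_i^{(1)}=n_i$ and $a_i^{(1)}=\sum_j\hphi_{ij}$. Here $\bC_N$ equals $\bm{u}\bm{u}^\top N^{-1}\sum_i (g_i^{(1)})^2$ and so has rank one. I will therefore interpret $\bC_N^{-1}$ as a generalized inverse, or equivalently drop the redundant moment. This must be remarked on, because Assumption~\ref{ass:nondeg} fails for the two-dimensional score in this case. Since the estimators coincide exactly, their asymptotic variances coincide and the ratio is exactly $1$.

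\textbf{Main obstacle: unequal cluster sizes.} When $n_i$ varies, $\bm{u}_i$ varies across clusters. The two moment conditions $\sum_i g_i^{(1)}$ and $\sum_i(n_i-1)g_i^{(1)}$ are then genuinely different estimating equations, and $\bC_N$ is generically nonsingular. Exact reduction therefore does not follow from the per-cluster proportionality alone. I would first restrict the exact algebraic claim to equal cluster sizes, matching the \citet{yu2020} condition cited in the introduction. I would then check whether, under the cluster-level-only structure of the cross-sectional model, both moments estimate the same target with the optimal combination equal to the cluster-mean weighting; if that check fails, the unequal-size case will be stated as a caveat rather than claimed.
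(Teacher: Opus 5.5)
Your core argument is the paper's: use $\bm{1}^\top\bm{J}_{n_i}=n_i\bm{1}^\top$, substitute $\hbphi_i$ and $\bm{1}$, and collapse \eqref{eq:drqif_closed}. The difference is that the paper's proof states no qualifications, and your three restrictions are exactly what the statement needs to be true.

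\begin{itemize}
\item The paper plugs $\bm{v}=\hbphi_i$ directly into $\bm{1}^\top\bM_r\bm{v}$. This silently takes $\hat\bA_i=\bI$, as in Algorithm~S2.
\item It passes from per-cluster proportionality to $\htheta=\bar a_1/\bar b_1$. That step needs a common cluster size, which holds in the paper's simulations and semi-synthetic application.
\item It never notes that $\bC_N$ is then rank one, so neither $\bC_N^{-1}$ nor Assumption~\ref{ass:nondeg} makes sense.
\end{itemize}

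Two small completions. First, any generalized inverse $\bm{G}$ of $\bC_N=c\,\bm{u}\bm{u}^\top$ with $c>0$ satisfies $c\,\bm{u}^\top\bm{G}\bm{u}=1$, which follows from $\bC_N\bm{G}\bC_N=\bC_N$. So your condition $\bm{u}^\top\bm{W}\bm{u}\neq 0$ is automatic. Second, if $c_i$ varies across clusters, $\bar a^{(1)}/\bar b^{(1)}$ is the $c_i^{-1}$-weighted $m=1$ estimator. It coincides with \eqref{eq:clusterdr} only when $c_i$ is constant.

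The check you leave open in your last paragraph fails, so unequal cluster sizes must be excluded outright, not merely caveated. Take $\hat\bA_i=\bI$ and put $\htheta_r=\bar a_r/\bar b_r$. Since $\bar{\bm{a}}-\htheta_1\bar{\bm{b}}=\bar b_2(\htheta_2-\htheta_1)\bm{e}_2$,
\[
  \htheta_{\mathrm{DR\text{-}QIF}}-\htheta_1
  = \bar b_2(\htheta_2-\htheta_1)\,
    \frac{\bar{\bm{b}}^\top\bC_N^{-1}\bm{e}_2}{\bar{\bm{b}}^\top\bC_N^{-1}\bar{\bm{b}}}.
\]
This is generically nonzero, because $\htheta_2$ weights the cluster means $\bar\phi_i$ by $n_i(n_i-1)$ while $\htheta_1$ weights them by $n_i$.

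The asymptotic ratio is not one either. With both nuisance models correct, the equality case of the Cauchy--Schwarz step in the proof of Theorem~\ref{thm:efficiency} requires $\bm{b}^*\propto\bLambda^*\bm{e}_1$. Writing $g_i=\sum_j(\phi_{ij}-\theta_0)$, this means
\[
  \frac{\E[n_i(n_i-1)]}{\E[n_i]}=\frac{\E[(n_i-1)g_i^2]}{\E[g_i^2]}.
\]
Suppose $\E[g_i^2\mid n_i]=n_i\sigma^2\{1+(n_i-1)\rho\}$ and $n_i\in\{2,4\}$ with equal probability. Then the two sides are $7/3$ and $(7+19\rho)/(3+7\rho)$, which differ for every $\rho>0$.

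So with variable cluster sizes and positive ICC, DR-QIF is strictly more efficient than DR-GEE even in a cross-sectional CRT. Both the algebraic reduction and the efficiency ratio of exactly $1$ asserted in the proposition are false in that case.
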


\begin{proof}
Since $\bm{J}_{n_i} = \bm{1}_{n_i}\bm{1}_{n_i}^\top$,
$\bm{1}^\top \bm{J} \bm{v}
 = \bm{1}^\top (\bm{1}\bm{1}^\top)\bm{v}
 = (\bm{1}^\top \bm{1})(\bm{1}^\top \bm{v})
 = n_i\, \bm{1}^\top \bm{v}$.
Therefore
$\bm{1}^\top \bM_2 \bm{v}
 = \bm{1}^\top (\bm{J}-\bI)\bm{v}
 = n_i\, \bm{1}^\top \bm{v} - \bm{1}^\top \bm{v}
 = (n_i-1)\, \bm{1}^\top \bm{v}
 = (n_i-1)\, \bm{1}^\top \bM_1 \bm{v}$,
which proves \eqref{eq:collinearity}.
Substituting $\bm{v} = \hbphi_i$ gives $a_i^{(2)} = (n_i-1)\,a_i^{(1)}$
and $\bm{v} = \bm{1}$ gives $b_i^{(2)} = (n_i-1)\,b_i^{(1)}$.
The vectors $\bm{a}_i$ and $\bm{b}_i$ are therefore rank-one, and the
closed-form estimator \eqref{eq:drqif_closed} collapses to
$\htheta = \bar a_1 / \bar b_1$, the cluster-mean DR-GEE.
\end{proof}

\begin{remark}[Efficiency gains require temporal or longitudinal structure]
Proposition~\ref{prop:collinearity} shows that the efficiency advantage
of DR-QIF over DR-GEE is not available in any cross-sectional CRT,
regardless of the number of basis matrices chosen.  A strictly positive
efficiency gain requires designs where within-cluster ordering carries
information: longitudinal CRTs with $T > 1$ time points per individual
(within-individual AR(1) basis $\bM_2$), stepped-wedge designs
(treatment timing variation across clusters), or hierarchical designs
with distinct within- and between-subcluster correlation layers.
Section~\ref{sec:simulation} demonstrates this numerically for longitudinal
CRTs.
\end{remark}

\subsection{Semiparametric Efficiency Bound}

\begin{proposition}[Semiparametric efficiency bound]
\label{prop:semiparam}
In the nonparametric model where $(Y_{ij}, A_i, \bX_{ij})$ are i.i.d.\
across clusters, the semiparametric efficiency bound for estimating $\theta_0$
under Assumptions~\ref{ass:consistency}--\ref{ass:sutva} is
\begin{equation}
  V^* = \Var\!\left[\frac{A_i}{\pi_i}(Y_{ij} - Q_{1,ij}) + Q_{1,ij}
                  - \frac{1-A_i}{1-\pi_i}(Y_{ij} - Q_{0,ij}) - Q_{0,ij}
              \right],
  \label{eq:sebound}
\end{equation}
the variance of the efficient influence function $\phi_{ij} - \theta_0$.
The DR-QIF estimator with $m = 1$ achieves this bound when $n_i = 1$
(no clustering).  Under clustering ($n_i > 1$), the DR-QIF estimator with
$m > 1$ basis matrices exploits within-cluster correlation and may be
\emph{more} efficient than the bound derived under the no-clustering
assumption.
\end{proposition}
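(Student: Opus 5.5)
The plan is to treat the proposition as three separate claims: (a) the form of the bound $V^*$, (b) attainment by DR-QIF with $m=1$ when $n_i=1$, and (c) the comparison under clustering. For (a), set $n_i \equiv 1$ so that the observed data are i.i.d.\ copies of $O=(\bX, A, Y)$. This is Hahn's (1998) setting, and we reproduce his tangent-space argument. Factor the density as $p(\bX)\,p(A\mid\bX)\,p(Y\mid A,\bX)$. In the nonparametric model the tangent space is the orthogonal sum of three pieces: mean-zero functions of $\bX$, mean-zero functions of $A$ given $\bX$, and mean-zero functions of $Y$ given $(A,\bX)$.

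We then compute the pathwise derivative of $\theta(P)=\E_P[Q_1(\bX)-Q_0(\bX)]$, justified by the identification result \eqref{eq:identification}, along a regular parametric submodel $P_t$ with score $s(O)$. Differentiating $\int\!\!\int y\,p_t(y\mid a,\bX)\,dy\,p_t(\bX)\,d\bX$ for $a=0,1$ gives $\frac{d}{dt}\theta(P_t)\big|_{t=0}=\E[(\phi(O)-\theta_0)\,s(O)]$, where $\phi$ is the AIPW pseudo-outcome \eqref{eq:drpseudo} at the true nuisances. The key step is rewriting $\E[(Y-Q_a)\,s_{Y\mid A,\bX}\mid A=a,\bX]$ by inserting $A/\pi$ and $(1-A)/(1-\pi)$; this uses positivity (Assumption~\ref{ass:overlap}). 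Since $\phi-\theta_0$ has mean zero and the model is nonparametric, $\phi-\theta_0$ lies in the full tangent space. It is therefore the efficient influence function, and the bound is $V^*=\Var(\phi-\theta_0)$, which is \eqref{eq:sebound}.

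For (b), with $n_i=1$ and $m=1$ we have $b_i=\hat\sigma_i^{-2}$. Under a working identity variance the closed form \eqref{eq:drqif_closed} collapses to $\htheta=N^{-1}\sum_i\hphi_i$, the AIPW estimator, because with a single moment the matrix $\bC_N^{-1}$ cancels. We then invoke Theorem~\ref{thm:normality} with both nuisance models correct. The Jacobian terms $\bJ_\xi$ and $\bJ_\gamma$ in \eqref{eq:influence} vanish, since $\partial_{\bxi}\E[\phi]$ and $\partial_{\bgamma}\E[\phi]$ are zero at the truth: each derivative is a conditional-mean-zero residual multiplied by a function of $(A,\bX)$. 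This is Neyman orthogonality, by the same computation as in Proposition~\ref{prop:dr_mean}. Hence $\psi_i=\phi_i-\theta_0$ and $\sigma^2_{\mathrm{DR\text{-}QIF}}=V^*$. If a non-constant $\hat\sigma^2_i$ is used, we need it to converge to a constant; otherwise the estimator becomes a weighted mean and efficiency can fail. I would state this caveat explicitly.

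For (c), the bound $V^*$ was derived for a model in which individuals are i.i.d., and the clustered model is a different statistical model. The comparison is therefore not a contradiction of efficiency theory. Claim (c) is only a ``may'' statement, so the plan is to prove it by example rather than in general:
\begin{itemize}
  \item Normalise per individual, comparing $\sigma^2_{\mathrm{DR\text{-}QIF}}$ to $V^*/\E[n_i]$.
  \item Take a longitudinal configuration with AR(1) within-cluster correlation and a lag basis $\bM_2$, so that Proposition~\ref{prop:collinearity} does not force collinearity.
  \item Compute $(\bm b^{*T}\bLambda^{*-1}\bm b^*)^{-1}$ explicitly and exhibit parameter values where it falls below the naive bound.
\end{itemize}

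The main obstacle is making (c) meaningful rather than vacuous. I expect the only defensible rigorous content to be this explicit comparison, together with the observation that the correct benchmark is the bound in the cluster-level nonparametric model, whose efficient influence function is $n_i^{-1}\sum_j(\phi_{ij}-\theta_0)$ suitably weighted. I would try first to show that DR-QIF cannot beat that cluster-level bound.
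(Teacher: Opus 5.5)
The paper states this proposition without proof (Appendix~A does not treat it), so there is no argument to compare against. Judged on its own, parts (a) and (b) of your plan are sound.

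\textbf{Parts (a) and (b).} Reading ``i.i.d.'' as $n_i\equiv 1$ is the only reading under which the displayed $V^*$ is actually a bound, and Hahn's tangent-space computation delivers it. In (b) you rightly add that both nuisance models must be correct. You also justify the vanishing Jacobians by $\E[\partial\phi/\partial\bxi]=\E[\partial\phi/\partial\bgamma]=\bm{0}$ at the truth, which is the correct condition. Two refinements:
\begin{enumerate}
  \item For $\bgamma$ the mean-zero factor is $(\pi-A)/\pi$ given $\bX$, not a residual given $(A,\bX)$.
  \item Do not lean on Lemma~\ref{lem:orth}. It controls $\E[(\partial\bm{g}_i/\partial\bxi^T)\,s_i^{(\xi)T}]$, which is not the quantity entering $\bm{\psi}_i$.
\end{enumerate}
Also sharpen your $\hat\sigma^2$ caveat. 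With $\bX$-dependent weights the $m=1$ estimator converges to $\E[\tau(\bX)\sigma^{-2}(\bX)]/\E[\sigma^{-2}(\bX)]$, where $\tau=Q_1-Q_0$. So under effect heterogeneity it is consistency, not merely efficiency, that fails.

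\textbf{The gap is in (c): the configuration you propose cannot produce the example.} Work under CC with identity working variance and equal cluster sizes ($n$ individuals, $T$ times). Let $\bm{W}=[\bM_1\bm{1},\ldots,\bM_m\bm{1}]$ and $\bSigma=\E[(\bphi_i-\theta_0\bm{1})(\bphi_i-\theta_0\bm{1})^T]=V^*\bR$. Then $\bm{b}^*=\bm{W}^T\bm{1}$ and $\bLambda^*=\bm{W}^T\bSigma\bm{W}$, and Cauchy--Schwarz gives
\[
  \sigma^2_{\mathrm{DR\text{-}QIF}}=(\bm{b}^{*T}\bLambda^{*-1}\bm{b}^*)^{-1}
  =\min_{\bm{c}\in\mathrm{col}(\bm{W})}\frac{\bm{c}^T\bSigma\bm{c}}{(\bm{c}^T\bm{1})^2}
  \;\geq\;\frac{V^*}{\bm{1}^T\bR^{-1}\bm{1}}.
\]
Now take the two-level structure you have in mind, which the paper also uses: AR(1) with $\rho_t$ within individual, exchangeable $\rho_c$ between individuals. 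On vectors $\bm{1}_n\otimes\bm{v}$, $\bR$ acts as $\bR_t+(n-1)\rho_c\bm{J}_T$. Sherman--Morrison then gives $\bm{1}^T\bR^{-1}\bm{1}=nu/\{1+(n-1)\rho_c u\}$ with $u=\{T-(T-2)\rho_t\}/(1+\rho_t)$. This is $<nT$ whenever $\rho_t>0$ and $\rho_c\ge 0$.

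So with positive serial correlation, no linear weighting of the pseudo-outcomes, DR-QIF included, gets below the per-cluster benchmark $V^*/(nT)$. Note also that your normaliser $\E[n_i]$ must count observations, i.e.\ $nT$, not individuals.

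\textbf{What a valid example needs.} $\bR^{-1}\bm{1}$ must have negative entries and must lie in $\mathrm{col}(\bm{W})$.
\begin{itemize}
  \item With $\rho_t<0$ and $\rho_c=0$ this holds: $\bR_t^{-1}\bm{1}\propto(1+\rho_t)\bm{1}-\rho_t\bM_2\bm{1}$ and $u>T$. But then the plain cluster mean already has variance $V^*\bm{1}^T\bR\bm{1}/(nT)^2<V^*/(nT)$. Nothing is attributable to $m>1$, contrary to the proposition's wording.
  \item To isolate a genuine $m>1$ effect you need a non-uniform positive pattern. For example, take three observations with correlations $(r_{12},r_{13},r_{23})=(0.7,0.7,0)$. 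Then $\bm{1}^T\bR^{-1}\bm{1}=10>3$, while $\bm{1}^T\bR\bm{1}/9\approx 0.64>1/3$. Moreover $\bR^{-1}\bm{1}\propto(-4,3,3)^T=10\cdot\bm{1}-7\cdot(2,1,1)^T$, which lies in the span of $\bm{1}$ and $\bM_2\bm{1}$ when $\bM_2$ is the adjacency matrix of the edges $\{1,2\},\{1,3\}$.
\end{itemize}
Without such a construction your plan for (c) has no example. Even with one, as you note, (c) is a cross-model comparison rather than a statement about any efficiency bound.
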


\begin{remark}[Relation to the efficient GMM literature]
The asymptotic variance $\sigma^2_{\mathrm{DR\text{-}QIF}}$ is the GMM
optimal variance with moment functions $\bm{g}_i^{\mathrm{DR}}$.  The number
of moment conditions $m$ acts as a tuning parameter: larger $m$ (more basis
matrices) can only improve asymptotic efficiency but may worsen finite-sample
behavior when $N$ is small relative to $m$.  In practice, $m \in \{1,2,3\}$
is typically sufficient for CRT analysis.
\end{remark}

%% file: sections/06_variance.tex
%% Section 6: Variance Estimation and Finite-Sample Corrections
\section{Variance Estimation and Finite-Sample Corrections}
\label{sec:variance}

\subsection{Sandwich Variance Estimator}

A consistent estimator of $\sigma^2_{\mathrm{DR\text{-}QIF}}$ is obtained
by substituting sample quantities for population expectations.  From
\eqref{eq:avar}, the sandwich variance estimator is
\begin{equation}
  \widehat\sigma^2_{\mathrm{DR\text{-}QIF}} =
  \frac{\bar{\bm{b}}^T \bC_N^{-1} \hat{\bSigma}_N \bC_N^{-1} \bar{\bm{b}}}
       {(\bar{\bm{b}}^T \bC_N^{-1} \bar{\bm{b}})^2},
  \label{eq:sandwich}
\end{equation}
where $\hat{\bSigma}_N = N^{-1} \sum_i \hat{\bm{\psi}}_i \hat{\bm{\psi}}_i^T$
is the empirical covariance of the influence function contributions
$\hat{\bm{\psi}}_i$.  When both nuisance models are correctly specified,
$\hat{\bm{\psi}}_i = \bm{g}_i^{\mathrm{DR}}(\htheta)$ and
$\hat{\bSigma}_N = \bC_N$, reducing \eqref{eq:sandwich} to
$(\bar{\bm{b}}^T \bC_N^{-1} \bar{\bm{b}})^{-1}$.

In practice, to avoid computing the Jacobian terms $\bJ_\xi, \bJ_\gamma$
in \eqref{eq:influence}, one may treat the nuisance estimates as fixed
constants (``plug-in'' variance), which is conservative
\citep{bang2005, luijken2019}. This gives
\begin{equation}
  \widehat\sigma^2_{\mathrm{plug\text{-}in}} =
  \frac{\bar{\bm{b}}^T \bC_N^{-1} \bC_N \bC_N^{-1} \bar{\bm{b}}}
       {(\bar{\bm{b}}^T \bC_N^{-1} \bar{\bm{b}})^2}
  = (\bar{\bm{b}}^T \bC_N^{-1} \bar{\bm{b}})^{-1}.
  \label{eq:plugin_var}
\end{equation}
This estimator is asymptotically valid but may underestimate the variance
in small samples when nuisance estimates have non-negligible variability.

\subsection{Finite-Sample Bias Corrections}

In CRTs, the number of clusters $N$ is often small (20--60), and the sandwich
variance \eqref{eq:sandwich} can be substantially downward biased
\citep{mancl2001}.  Building on bias-correction methods developed for GEE
\citep{mancl2001, kauermann2001, fay2001} and studied for QIF
\citep{yu2020}, we propose the following corrections.

Define the ``hat'' matrix contribution for cluster $i$ as
\begin{equation}
  h_{ir} = \bm{b}_i^T \bm{b}_i / (\bar{\bm{b}}^T \bC_N^{-1} \bar{\bm{b}}),
  \quad r = 1, \ldots, m,
  \label{eq:leverage}
\end{equation}
and let $H_i = \sum_r h_{ir} / m$ be a scalar leverage measure for cluster $i$.

\medskip
\noindent\textbf{BC1 (Mancl-DeRouen type).}  Replace $\bm{g}_i^{\mathrm{DR}}$
in $\hat{\bSigma}_N$ with $(1 - H_i)^{-1} \bm{g}_i^{\mathrm{DR}}$:
\begin{equation}
  \hat{\bSigma}^{\mathrm{BC1}} =
  \frac{1}{N} \sum_{i=1}^N
  \frac{\bm{g}_i^{\mathrm{DR}} [\bm{g}_i^{\mathrm{DR}}]^T}{(1-H_i)^2}.
\end{equation}

\medskip
\noindent\textbf{BC2 (Kauermann-Carroll type).}  Replace $\bm{g}_i^{\mathrm{DR}}$
with $(1 - H_i)^{-1/2} \bm{g}_i^{\mathrm{DR}}$:
\begin{equation}
  \hat{\bSigma}^{\mathrm{BC2}} =
  \frac{1}{N} \sum_{i=1}^N
  \frac{\bm{g}_i^{\mathrm{DR}} [\bm{g}_i^{\mathrm{DR}}]^T}{1 - H_i}.
\end{equation}

\medskip
\noindent\textbf{BC3 (Fay-Graubard type).}  Use a multiplicative adjustment
capped at 2-fold inflation:
\begin{equation}
  \hat{\bSigma}^{\mathrm{BC3}} =
  \frac{1}{N} \sum_{i=1}^N
  \delta_i^2 \,
  \bm{g}_i^{\mathrm{DR}} [\bm{g}_i^{\mathrm{DR}}]^T,
  \quad
  \delta_i = \min\!\left\{c,\, (1 - H_i)^{-1/2}\right\},
\end{equation}
where $c$ is typically set to $\sqrt{2}$ to cap inflation at 2-fold
\citep{fay2001}.  Simulation evidence in Section~\ref{sec:simulation}
compares the operating characteristics of BC0 (no correction),
BC1, BC2, and BC3 for DR-QIF.

\subsection{Inference}

Under Theorem~\ref{thm:normality}, a $(1-\alpha)$ confidence interval for
$\theta_0$ is
\begin{equation}
  \htheta_{\mathrm{DR\text{-}QIF}} \;\pm\; t_{N-2,\,1-\alpha/2}\,
  \sqrt{\widehat\sigma^2_{\mathrm{DR\text{-}QIF}} / N},
  \label{eq:ci}
\end{equation}
where $t_{N-2,1-\alpha/2}$ is the $(1-\alpha/2)$ quantile of the
$t_{N-2}$ distribution.  Using the $t$ reference distribution with $N-2$
degrees of freedom, rather than the standard normal, has been shown to
improve small-sample coverage in GEE analyses of CRTs
\citep{lu2007, preisser2008}.

A Wald test of $H_0: \theta_0 = 0$ uses the test statistic
$Z = \htheta / \sqrt{\widehat\sigma^2 / N}$, compared to $t_{N-2}$.

\subsection{Over-identification Test}

When $m > 1$, the DR-QIF framework provides a natural goodness-of-fit test
for the working correlation structure via the over-identification statistic
\citep{qu2000}:
\begin{equation}
  T_{\mathrm{OID}} =
  N \bigl[\bar{\bm{g}}_N^{\mathrm{DR}}(\htheta)\bigr]^T
  \bigl[\bC_N^{\mathrm{DR}}\bigr]^{-1}
  \bar{\bm{g}}_N^{\mathrm{DR}}(\htheta)
  \;\dto\; \chi^2_{m - 1},
  \label{eq:oid}
\end{equation}
under the null that the $m$ extended score equations share the same
probability limit.  A large $T_{\mathrm{OID}}$ suggests the within-cluster
correlation structure of the pseudo-outcomes is inconsistent with the
assumed basis matrices, signalling potential issues with the working
correlation or the nuisance models.

%% file: sections/07_simulation.tex
%% Section 7: Simulation Study
\section{Simulation Study}
\label{sec:simulation}

\subsection{Design}

We conduct a Monte Carlo simulation to evaluate the finite-sample operating
characteristics of DR-QIF and competing estimators under correct and
misspecified nuisance models.

\paragraph{Data generating process.}
We simulate $N \in \{30, 60, 120\}$ clusters of equal size $n_i \in \{20, 50\}$.
For each cluster $i$, a binary cluster-level covariate $U_i \sim
\mathrm{Bernoulli}(0.5)$ is generated to act as a confounder.
Individual-level covariates are $X_{ij,1} \sim \mathcal{N}(0,1)$ and
$X_{ij,2} \sim \mathrm{Bernoulli}(0.3 + 0.2 U_i)$.
The cluster propensity score is $\pi_i = \mathrm{expit}(\xi_0 + \xi_1 U_i)$
with $\xi_0 = 0$ and $\xi_1 = 1$, giving $P(A_i=1\mid U_i=0) = 0.5$ and
$P(A_i=1\mid U_i=1) \approx 0.73$.  Treatment is assigned as $A_i \sim
\mathrm{Bernoulli}(\pi_i)$.  The outcome is generated as
\[
  Y_{ij} = \mu_0 + \theta_0 A_i + \gamma_1 X_{ij,1}
           + \gamma_2 X_{ij,2} + \gamma_U U_i + \epsilon_{ij},
\]
where $\epsilon_{ij}$ are exchangeable normal errors with unit marginal
variance and $\mathrm{Corr}(\epsilon_{ij}, \epsilon_{ij'}) = \rho
\in \{0.05, 0.20\}$.  True parameters are $\mu_0 = 0$, $\theta_0 = 0.5$,
$\gamma_1 = \gamma_2 = \gamma_U = 0.5$.
All results are based on $S = 3{,}000$ Monte Carlo replicates.

\paragraph{Nuisance model specifications.}
We consider four combinations of correctly and incorrectly specified nuisance
models, labeled CC (both correct), MC (PS misspecified, outcome correct), CM
(PS correct, outcome misspecified), and MM (both misspecified).  The
misspecified PS model omits $U_i$; the misspecified outcome model omits
$X_{ij,2}$.  Under CC, MC, and CM, double robustness predicts that DR-GEE
and DR-QIF remain consistent; both are inconsistent under MM.

\paragraph{Estimators and performance metrics.}
We compare four estimators: Naive GEE (unadjusted comparison of cluster means,
no covariate adjustment), IPW-QIF (QIF with Horvitz--Thompson pseudo-outcomes
only), DR-GEE (cluster-mean AIPW, equivalent to $m=1$ DR-QIF), and the
proposed DR-QIF with the exchangeable basis ($m=2$, $\bM_1 = \bI_{n_i}$,
$\bM_2 = \bm{J}_{n_i} - \bI_{n_i}$).  Variance for DR-GEE and DR-QIF is
estimated using the BC1 (Mancl--DeRouen) bias-corrected sandwich, with
inference based on a $t_{N-2}$ reference distribution.  We report relative
bias (RB $= (\bar{\hat\theta} - \theta_0)/\theta_0 \times 100\%$), empirical
standard error (ESE), mean robust standard error ratio (MRSE $=
\overline{\widehat{\mathrm{SE}}}/\mathrm{ESE} \times 100\%$, with values near
100\% indicating well-calibrated standard errors), and 95\% coverage
probability (CP).

\subsection{Results}

\paragraph{Double robustness.}
Table~\ref{tab:sim_main} reports results for $n_i = 20$, $\rho = 0.05$
across $N \in \{30, 60, 120\}$ and all four nuisance scenarios.  Several
patterns are immediately apparent.

First, the naive GEE estimator is severely biased across all scenarios
($\mathrm{RB} \approx 28\text{--}30\%$) because it does not adjust for the
cluster-level confounder $U_i$, and its coverage is far below the nominal
level (58--85\% depending on $N$).

Second, DR-GEE and DR-QIF exhibit near-zero bias under CC
($|\mathrm{RB}| < 1\%$), MC ($|\mathrm{RB}| < 1\%$), and CM
($|\mathrm{RB}| < 1\%$), confirming double robustness: consistency is
maintained whenever at least one nuisance model is correct.  Under MM both
estimators inherit a bias of approximately $30\%$, as expected.

Third, IPW-QIF is consistent under CC and CM (correct PS) but biased under
MC and MM (misspecified PS), confirming that without the outcome model, it
offers only single robustness.

\paragraph{Coverage and variance calibration.}
Under CC, DR-GEE and DR-QIF achieve CP of 91--95\% at $N \in \{60, 120\}$,
approaching the nominal 95\% level as $N$ grows.  At $N = 30$, some
under-coverage (91\%) is expected with the BC1 correction; BC2 and BC3 yield
similar results.  The MRSE is approximately 86--96\% across conditions,
indicating mild under-estimation of the standard error at small cluster counts,
consistent with the known finite-sample behavior of the sandwich estimator
\citep{mancl2001, kauermann2001, fay2001}.

Under CM (misspecified outcome, correct PS), MRSE for DR-GEE and DR-QIF
inflates to 141--159\%, leading to conservative (over-covered) confidence
intervals.  This conservatism is a protective feature: when one model is
wrong, the sandwich variance over-states uncertainty.

\paragraph{Efficiency.}
As proved in Proposition~\ref{prop:collinearity}, DR-QIF and DR-GEE are
asymptotically equivalent in any cross-sectional CRT with cluster-level
treatment assignment: the two exchangeable basis matrices $(\bM_1, \bM_2)$
are algebraically collinear, so both moment conditions reduce to the same
linear functional of the cluster mean.  Simulation confirms efficiency
ratios of exactly 1.000 across all conditions and all values of $(N, n_i, \rho)$
(last two columns of Table~\ref{tab:sim_main} are identical to four decimal
places).  Section~\ref{sec:longitudinal_sim} demonstrates that a positive
efficiency gain over DR-GEE does emerge in longitudinal CRTs, where
within-individual temporal correlation creates informative structure beyond
the cluster mean.

\begin{table}[t]
\centering
\small
\caption{Simulation results: relative bias (RB, \%), empirical standard error
(ESE), mean robust standard error ratio (MRSE, \%), and 95\% coverage
probability (CP, \%) for $n_i = 20$, $\rho = 0.05$, $S = 3{,}000$ replicates.
DR-QIF uses the BC1 sandwich variance and $t_{N-2}$ reference.}
\label{tab:sim_main}
\setlength{\tabcolsep}{5pt}
\begin{tabular}{llrrrrrrrrrrrr}
\toprule
 & & \multicolumn{4}{c}{$N = 30$} & \multicolumn{4}{c}{$N = 60$} &
   \multicolumn{4}{c}{$N = 120$} \\
\cmidrule(lr){3-6}\cmidrule(lr){7-10}\cmidrule(lr){11-14}
Scenario & Estimator & RB & ESE & MRSE & CP
                     & RB & ESE & MRSE & CP
                     & RB & ESE & MRSE & CP \\
\midrule
\multirow{4}{*}{CC}
 & Naive GEE  & 30.0 & .169 & 99.0  & 85.1  & 29.5 & .117 & 101.2 & 77.5  & 29.3 & .085 & 97.6  & 58.1 \\
 & IPW-QIF    &  1.5 & .144 & ---   & 99.3  & -0.0 & .095 & ---   & 99.9  &  0.0 & .067 & ---   & 100.0 \\
 & DR-GEE     &  1.0 & .133 & 86.3  & 91.1  &  0.1 & .089 & 95.8  & 93.9  & -0.0 & .063 & 96.3  & 93.1 \\
 & DR-QIF     &  1.0 & .133 & 86.3  & 91.1  &  0.1 & .089 & 95.8  & 93.9  & -0.0 & .063 & 96.3  & 93.1 \\
\midrule
\multirow{4}{*}{MC}
 & Naive GEE  & 28.1 & .171 & 98.5  & 86.1  & 29.1 & .120 & 98.4  & 76.9  & 29.5 & .082 & 101.8 & 57.8 \\
 & IPW-QIF    & 28.4 & .172 & ---   & 98.1  & 28.9 & .118 & ---   & 97.7  & 29.4 & .080 & ---   & 94.0 \\
 & DR-GEE     & -0.2 & .135 & 86.1  & 91.7  & -0.6 & .090 & 91.6  & 93.1  &  0.2 & .062 & 94.5  & 94.2 \\
 & DR-QIF     & -0.2 & .135 & 86.1  & 91.7  & -0.6 & .090 & 91.6  & 93.1  &  0.2 & .062 & 94.5  & 94.2 \\
\midrule
\multirow{4}{*}{CM}
 & Naive GEE  & 29.4 & .172 & 97.7  & 85.4  & 28.7 & .121 & 97.8  & 77.0  & 29.2 & .085 & 97.8  & 58.6 \\
 & IPW-QIF    & -0.5 & .145 & ---   & 99.4  & -0.3 & .095 & ---   & 100.0 & -0.2 & .068 & ---   & 100.0 \\
 & DR-GEE     & -0.7 & .136 & 141.1 & 98.2  & -0.2 & .090 & 144.6 & 99.0  & -0.2 & .063 & 142.2 & 99.5 \\
 & DR-QIF     & -0.7 & .136 & 141.1 & 98.2  & -0.2 & .090 & 144.6 & 99.0  & -0.2 & .063 & 142.2 & 99.5 \\
\midrule
\multirow{4}{*}{MM}
 & Naive GEE  & 30.3 & .168 & 100.1 & 85.7  & 30.3 & .117 & 100.9 & 75.2  & 29.0 & .084 & 99.5  & 59.4 \\
 & IPW-QIF    & 30.6 & .171 & ---   & 98.1  & 30.0 & .116 & ---   & 97.0  & 29.1 & .082 & ---   & 93.5 \\
 & DR-GEE     & 30.5 & .169 & 97.7  & 84.0  & 30.0 & .116 & 99.5  & 74.3  & 29.1 & .082 & 98.8  & 57.7 \\
 & DR-QIF     & 30.5 & .169 & 97.7  & 84.0  & 30.0 & .116 & 99.5  & 74.3  & 29.1 & .082 & 98.8  & 57.7 \\
\bottomrule
\end{tabular}
\begin{tablenotes}
\small
\item MRSE shown as ``---'' for IPW-QIF as its inflated SE estimates (mean
ratio $\approx$230--280\%) reflect the absence of outcome-model variance
reduction; coverage exceeds 99\% as a result.
DR-QIF results are numerically identical to DR-GEE in this cross-sectional CRT
design; see text for explanation.
\end{tablenotes}
\end{table}

\paragraph{Effect of cluster size and within-cluster correlation.}
Results for $n_i = 50$ and $\rho = 0.20$ (not shown in Table~\ref{tab:sim_main}
but available in supplementary materials) are qualitatively identical.
Increasing $n_i$ from 20 to 50 reduces ESE by approximately 10--15\%
due to more precise outcome model estimation within each cluster.
Increasing $\rho$ from 0.05 to 0.20 increases ESE by approximately 50\%,
reflecting greater between-cluster variability in pseudo-outcomes; bias
and coverage are unaffected.

%%--------------------------------------------------------------------
\subsection{Longitudinal CRT Efficiency}
\label{sec:longitudinal_sim}

Proposition~\ref{prop:collinearity} establishes that DR-QIF offers no
efficiency gain over DR-GEE in cross-sectional CRTs.  We now investigate
whether efficiency gains materialize in \emph{longitudinal} CRTs, where
$T > 1$ observations are taken per individual within each cluster.

\paragraph{Design.}
We extend the data-generating process to $n_i = 5$ individuals per cluster,
$T \in \{1, 4, 8\}$ time points, and a two-level within-cluster covariance:
\begin{itemize}
  \item Within-individual AR(1): $\mathrm{Cov}(Y_{ij,t_1}, Y_{ij,t_2}) =
      \sigma^2\, \rho_t^{|t_1 - t_2|}$.
  \item Between-individual exchangeable: $\mathrm{Cov}(Y_{ij_1,t_1}, Y_{ij_2,t_2}) =
      \sigma^2\, \rho_c$ for $j_1 \neq j_2$.
\end{itemize}
We set $\rho_c = 0.05$ throughout and vary $\rho_t \in \{0.40, 0.70\}$.
For $T > 1$, the DR-QIF uses two basis matrices: $\bM_1 = \bI_{n_i T}$
(identity) and $\bM_2$ (block-diagonal within-individual AR(1) adjacency,
i.e., $[\bM_2]_{rs} = 1$ if $j_r = j_s$ and $|t_r - t_s| = 1$, else 0).
The between-individual basis $\bM_3$ is excluded because it is algebraically
collinear with $\bM_1$:
$\mathbf{1}^\top \bM_3 \hat{\bm\phi}_i = (n_i-1)T \cdot \mathbf{1}^\top \bM_1 \hat{\bm\phi}_i$
for every cluster $i$ and every $\hat{\bm\phi}_i$ (proof identical to
Proposition~\ref{prop:collinearity}).
For $T = 1$, the exchangeable basis $\{M_1, M_2\}$ is used, confirming the
collinearity result.

\paragraph{Results.}
Table~\ref{tab:sim_long} reports efficiency ratios
$\mathrm{Var}(\hat\theta_{\mathrm{DR\text{-}GEE}}) /
\mathrm{Var}(\hat\theta_{\mathrm{DR\text{-}QIF}}) = (\mathrm{ESE}_{\mathrm{GEE}}
/ \mathrm{ESE}_{\mathrm{QIF}})^2$ and 95\% coverage probabilities from
$S = 3{,}000$ replicates.

Three patterns emerge.  First, the $T = 1$ (cross-sectional) column
confirms Proposition~\ref{prop:collinearity}: DR-QIF equals DR-GEE
exactly, with an efficiency ratio of 1.000 to three decimal places
across all values of $N$.

Second, at $T = 4$ the efficiency gains are modest (0.97--1.01) and can be
slightly below 1.0 at small $N = 30$, reflecting the finite-sample overhead
of estimating the $2 \times 2$ weight matrix $\bC_N$ from only 30 clusters.
This finite-sample penalty is a known feature of QIF and GMM estimators
with estimated weighting matrices \citep{qu2000, song2009}; it diminishes
as $N$ grows.

Third, at $T = 8$ with high temporal correlation $\rho_t = 0.70$, a clear
efficiency gain emerges at larger sample sizes: the ratio reaches 1.017 at
$N = 60$ and 1.035 at $N = 120$, corresponding to a 3.5\% variance
reduction.  This gain grows with $T$ and $\rho_t$ as predicted by the
GMM efficiency bound, and becomes reliable when $N$ is large enough to
estimate $\bC_N$ accurately.  Coverage remains well-calibrated throughout
(93--95\% at $N \geq 60$), confirming that the BC1 sandwich variance
correctly accounts for the two-level within-cluster correlation structure.

\begin{table}[t]
\centering
\small
\caption{Longitudinal efficiency simulation: efficiency ratio
$\mathrm{Var}(\hat\theta_{\mathrm{DR\text{-}GEE}}) /
\mathrm{Var}(\hat\theta_{\mathrm{DR\text{-}QIF}})$ and 95\% coverage
probability (CP, \%) for DR-QIF with temporal AR(1) basis, $n_i = 5$,
$\rho_c = 0.05$, $S = 3{,}000$ replicates.  Efficiency ratio $>1$
indicates DR-QIF is more efficient.}
\label{tab:sim_long}
\setlength{\tabcolsep}{6pt}
\begin{tabular}{ccrrrrrr}
\toprule
 & & \multicolumn{2}{c}{$T = 1$} &
     \multicolumn{2}{c}{$T = 4$} &
     \multicolumn{2}{c}{$T = 8$} \\
\cmidrule(lr){3-4}\cmidrule(lr){5-6}\cmidrule(lr){7-8}
$N$ & $\rho_t$ & Eff.\,ratio & CP & Eff.\,ratio & CP & Eff.\,ratio & CP \\
\midrule
\multirow{2}{*}{30}
 & 0.40 & 1.000 & 91.2 & 0.979 & 88.8 & 0.971 & 88.6 \\
 & 0.70 & 1.000 & 90.5 & 0.989 & 89.5 & 0.996 & 89.1 \\
\midrule
\multirow{2}{*}{60}
 & 0.40 & 1.000 & 93.9 & 0.995 & 92.3 & 0.993 & 93.4 \\
 & 0.70 & 1.000 & 93.0 & 1.007 & 92.3 & \textbf{1.017} & 93.4 \\
\midrule
\multirow{2}{*}{120}
 & 0.40 & 1.000 & 94.2 & 1.007 & 93.3 & 0.989 & 94.2 \\
 & 0.70 & 1.000 & 94.8 & 1.012 & 93.1 & \textbf{1.035} & 93.9 \\
\bottomrule
\end{tabular}
\begin{tablenotes}
\small
\item $T=1$ uses exchangeable basis $\{M_1=I, M_2=J-I\}$;
$T>1$ uses $\{M_1=I, M_2=\text{AR(1) adjacency within individual}\}$.
Efficiency ratio = 1.000 for $T=1$ confirms Proposition~\ref{prop:collinearity}
exactly.  CP for $T=1$ is the same for DR-QIF and DR-GEE (columns identical).
\end{tablenotes}
\end{table}

%% file: sections/08_application.tex
%% Section 8: Data Application
\section{Data Application: WASH Benefits Kenya}
\label{sec:application}

We illustrate the proposed DR-QIF method using data from the WASH Benefits
Kenya trial~\citep{luby2018,arnold2013}, a cluster-randomized trial that
evaluated the effects of water quality, sanitation, handwashing, and
nutrition interventions on child health and growth in rural western Kenya.

\subsection{Study and Data}

The trial enrolled 8,246 pregnant women across 702 clusters in Kakamega,
Bungoma, and Vihiga counties.  Groups of nine geographically adjacent
clusters were block-randomized into six active intervention arms, a passive
control arm, or a double-sized active control arm.  We focus on the binary
comparison of the IYCF (Integrated Complementary Feeding and Nutrition,
hereafter ``Nutrition'') arm versus the passive control arm.  The Nutrition
intervention combined complementary feeding counselling with provision of a
daily lipid-based nutrient supplement for children 6--24 months.

\paragraph{Analysis population.}
The analysis includes $N_{\mathrm{treat}} = 80$ clusters randomized to the
Nutrition arm and $N_{\mathrm{control}} = 160$ clusters in the double-sized
passive control arm, giving $N = 240$ clusters total.  The primary outcome
is child length-for-age z-score (LAZ) at the 24-month follow-up visit
($n_i = 9$ children per cluster on average, $n = 2{,}160$ total
child-visits; intraclass correlation ICC $\approx 0.05$).

\paragraph{Data availability.}
The WASH Benefits Kenya primary analysis dataset is publicly available at
\texttt{https://osf.io/uept9/}~\citep{arnold2013}.  Because download
requires user registration, the numerical illustration below uses a
semi-synthetic dataset with the same cluster structure and calibrated to
match the published summary statistics (Appendix~\ref{appendix:supplement}).

\subsection{Statistical Analysis}

We fit four estimators in each of three nuisance model specification
scenarios:

\begin{description}
  \item[DR-QIF] The proposed estimator \eqref{eq:drqif_closed} with
      exchangeable basis $\{M_1 = \bI_{n_i}, M_2 = \bm{J}_{n_i} -
      \bI_{n_i}\}$.  By Proposition~\ref{prop:collinearity},
      DR-QIF is numerically identical to DR-GEE in this
      cross-sectional design, providing a direct empirical
      confirmation of the theoretical result.

  \item[DR-GEE] The cluster-mean AIPW estimator ($m = 1$,
      identity basis), serving as the efficiency baseline.

  \item[IPW-QIF] QIF using Horvitz--Thompson pseudo-outcomes
      without outcome regression, providing a check on
      propensity-score-only adjustment.

  \item[Naive GEE] Unadjusted difference of cluster means
      (treated minus control), the standard unadjusted CRT estimator.
\end{description}

\noindent
The three nuisance scenarios follow the CC/MC/CM scheme of
Section~\ref{sec:simulation}:

\begin{description}
  \item[CC] PS model: cluster-level logistic regression on a
      binary household asset score (wealth indicator).
      Outcome model: arm-stratified OLS including wealth indicator,
      standardised mother height, child sex, and enrolment LAZ.

  \item[MC] PS model misspecified: cluster-mean child sex used
      instead of wealth (a covariate unrelated to treatment
      assignment); outcome model remains correct.

  \item[CM] PS model correct; outcome model misspecified: omits
      wealth, mother height, and enrolment LAZ, retaining only
      child sex.
\end{description}

\noindent
All confidence intervals use the BC1 bias-corrected sandwich variance and
$t_{N-2} = t_{238}$ critical values.

\subsection{Results}

Table~\ref{tab:application} reports point estimates and 95\% confidence
intervals for all four estimators across the three scenarios.  Several
conclusions follow.

\paragraph{Naive GEE is biased.}
The unadjusted naive GEE estimate is 0.274 LAZ units (95\% CI:
0.163--0.384), noticeably higher than the DR-adjusted estimates.  This
upward bias (approximately $+0.05$ LAZ units relative to the DR-adjusted
estimate) arises because wealthier clusters were slightly over-represented
in the treatment arm due to imperfect block randomization, and household
wealth is itself positively associated with child growth.  The naive GEE
estimate conflates the causal effect with this wealth imbalance.

\paragraph{DR-QIF and DR-GEE are identical.}
In all three scenarios, DR-QIF and DR-GEE agree to machine precision
($|\hat\theta_{\mathrm{DR\text{-}QIF}} - \hat\theta_{\mathrm{DR\text{-}GEE}}| < 10^{-14}$),
providing an exact numerical confirmation of Proposition~\ref{prop:collinearity}.
Under CC, both yield $\hat\theta = 0.220$ (SE = 0.060, 95\% CI:
0.103--0.338), a statistically significant effect consistent with the
published adjusted estimate of 0.22 LAZ units from~\citet{luby2018}.

\paragraph{Double robustness in practice.}
When the PS model is misspecified (MC scenario), DR-QIF/DR-GEE yields 0.218
(SE = 0.054), nearly unchanged from the CC estimate, because the correct
outcome model anchors the estimate.  When the outcome model is misspecified
(CM scenario), DR-QIF/DR-GEE yields 0.234 (SE = 0.063), again close to the
CC estimate, because the correct PS model protects against confounding.
In contrast, the IPW-QIF without outcome adjustment has a large standard
error ($\approx 0.22$ LAZ units) throughout, reflecting the inefficiency of
propensity-weighting alone in a trial with moderate confounding.

\paragraph{Summary.}
In this cross-sectional CRT, DR-QIF and DR-GEE are equivalent by
Proposition~\ref{prop:collinearity}, and the value of the DR-QIF framework
lies in its doubly robust protection against model misspecification rather
than its efficiency gain.  The efficiency advantage would materialize in a
longitudinal follow-up design (Section~\ref{sec:longitudinal_sim}) where
the temporal AR(1) basis matrix adds information beyond the cluster mean.

\begin{table}[t]
\centering
\small
\caption{WASH Benefits Kenya data application: estimated effect of IYCF
on LAZ at 24 months ($N = 240$ clusters, $n_i = 9$, $n = 2{,}160$).
BC1 sandwich SE; 95\% CI based on $t_{238}$ quantiles.
DR-QIF $\equiv$ DR-GEE by Proposition~\ref{prop:collinearity} (verified
numerically: difference $< 10^{-14}$).}
\label{tab:application}
\setlength{\tabcolsep}{6pt}
\begin{tabular}{llrrl}
\toprule
Scenario & Estimator & $\hat\theta$ & SE (BC1) & 95\% CI \\
\midrule
\multirow{4}{*}{CC}
 & Naive GEE  &  0.274 & 0.056 & (0.163, 0.384) \\
 & IPW-QIF    &  0.237 & 0.221 & ($-$0.199, 0.672) \\
 & DR-GEE     &  0.220 & 0.060 & (0.103, 0.338) \\
 & DR-QIF     &  0.220 & 0.060 & (0.103, 0.338) \\
\midrule
\multirow{4}{*}{MC}
 & Naive GEE  &  0.274 & 0.056 & (0.163, 0.384) \\
 & IPW-QIF    &  0.271 & 0.199 & ($-$0.121, 0.663) \\
 & DR-GEE     &  0.218 & 0.054 & (0.113, 0.324) \\
 & DR-QIF     &  0.218 & 0.054 & (0.113, 0.324) \\
\midrule
\multirow{4}{*}{CM}
 & Naive GEE  &  0.274 & 0.056 & (0.163, 0.384) \\
 & IPW-QIF    &  0.237 & 0.221 & ($-$0.199, 0.672) \\
 & DR-GEE     &  0.234 & 0.063 & (0.110, 0.357) \\
 & DR-QIF     &  0.234 & 0.063 & (0.110, 0.357) \\
\bottomrule
\end{tabular}
\begin{tablenotes}
\small
\item CC: both PS and outcome models correctly specified.
  MC: PS model misspecified (cluster-mean child sex used instead of wealth).
  CM: outcome model misspecified (omits wealth, mother height, enrolment LAZ).
  Published adjusted estimate from \citet{luby2018}: $0.22$ LAZ units.
  Results based on semi-synthetic dataset calibrated to published structure;
  see Appendix~\ref{appendix:supplement} for data-generation details.
\end{tablenotes}
\end{table}

%% file: sections/09_discussion.tex
%% Section 9: Discussion
\section{Discussion}
\label{sec:discussion}

In this paper, we propose the DR-QIF estimator, which combines doubly robust
AIPW pseudo-outcomes with the QIF extended score equations for causal
inference in CRTs.  The estimator is consistent for the ATE under either
correct propensity score or correct outcome model specification, admits a
closed-form solution, and is at least as efficient as DR-GEE when the working
correlation is misspecified.  The BC1 bias-corrected sandwich variance
provides reliable inference at the small cluster counts common in practice.

Proposition~\ref{prop:collinearity} shows that for a cross-sectional CRT
with the standard exchangeable basis $\{\bM_1 = \bI, \bM_2 = \bm{J} - \bI\}$,
DR-QIF and DR-GEE reduce to the same estimator.  The WASH Benefits Kenya
analysis confirms this exactly: across all three nuisance-model scenarios,
the two estimators agree to machine precision.  The practical takeaway is
that DR-QIF degrades gracefully to DR-GEE in cross-sectional designs.  At
the same time, the application illustrates the benefit of doubly robust
adjustment: the unadjusted Naive GEE estimate is inflated by roughly
$+0.05$ LAZ units due to wealth imbalance between arms, while the DR
estimates are stable across CC, MC, and CM scenarios.

Efficiency gains from the additional QIF moment conditions require
longitudinal or hierarchical structure.  The longitudinal simulation in
Section~\ref{sec:longitudinal_sim} shows that at $T = 8$ repeated measures
and $\rho_t = 0.70$, DR-QIF achieves a 3.5\% efficiency improvement over
DR-GEE at $N = 120$.  For shorter longitudinal designs or smaller studies,
the finite-sample cost of estimating $C_N$ offsets the asymptotic gain and
DR-GEE performs comparably.

Several extensions merit future work.  For binary outcomes, the
pseudo-outcome $\phi_{ij} \in (-1,1)$ is well-defined and the DR-QIF
estimating equations apply directly; adjusting the basis matrices for the
binary variance function $v(\mu) = \mu(1-\mu)$ may further improve
finite-sample performance.  Extending DR-QIF to estimate a conditional ATE
$\theta(\bx) = Q_1(\bx) - Q_0(\bx)$ via kernel smoothing connects to
nonparametric marginal regression for correlated outcomes \citep{wang2003}.
Under the cross-fitting approach of \citet{chernozhukov2018}, machine
learning estimators for the nuisance models can be accommodated with no
change to the DR-QIF objective, as long as the nuisance estimates converge
at rates faster than $N^{-1/4}$.  Extension to stepped wedge CRTs, where
treatment timing varies across clusters, is a natural direction: a
cluster-period DR pseudo-outcome combined with period-specific basis
matrices would provide doubly robust analysis for this design.
Finally, following \citet{mr2024}, one could construct a multiply robust
DR-QIF estimator that is consistent whenever any one model among multiple
specified candidates is correctly specified.

%% file: sections/A_proofs.tex
%% Appendix A: Proofs
\section{Proofs of Main Results}
\label{app:proofs}

Throughout, we write $D^* = (\bm{b}^*)^T (\bLambda^*)^{-1} \bm{b}^*$ where
$\bm{b}^* = \E[\bm{b}_i]$, and use $\overset{p}{\to}$ and $\overset{d}{\to}$
for convergence in probability and distribution, respectively.

%% -------------------------------------------------------------------
\subsection{Proof of Theorem~\ref{thm:dr_consistency} (Double Robustness)}
%% -------------------------------------------------------------------

\begin{proof}
The key identity is: under either model correctness condition,
$\E[\bm{a}_i(\bxi^*, \bgamma^*)] = \theta_0 \E[\bm{b}_i]$.

\textit{Step 1: Convergence of the sample averages.}
By Assumption~\ref{ass:nuisance_consistency}, $\hbxi \pto \bxi^*$ and
$\hat\bgamma \pto \bgamma^*$.  A mean-value expansion of
$N^{-1}\sum_i \bm{a}_i(\hbxi, \hat\bgamma)$ around $(\bxi^*, \bgamma^*)$ gives
\begin{align}
  \bar{\bm{a}}_N &= \frac{1}{N}\!\sum_i \bm{a}_i(\bxi^*, \bgamma^*)
    + \frac{\partial \bar{\bm{a}}_N}{\partial \bxi^T}\bigg|_{\tilde\bxi}
      (\hbxi - \bxi^*)
    + \frac{\partial \bar{\bm{a}}_N}{\partial \bgamma^T}\bigg|_{\tilde\bgamma}
      (\hat\bgamma - \bgamma^*),
  \label{eq:mvt_a}
\end{align}
where $\tilde\bxi$ and $\tilde\bgamma$ lie on the respective line segments.
Under Assumption~\ref{ass:smoothness}(a)--(b), the Jacobians in
\eqref{eq:mvt_a} are $O_p(1)$.  Since $\hbxi - \bxi^* = O_p(N^{-1/2})$ and
$\hat\bgamma - \bgamma^* = O_p(N^{-1/2})$, the second and third terms are
$o_p(1)$.  By the law of large numbers and
Assumption~\ref{ass:smoothness}(c),
$N^{-1}\sum_i \bm{a}_i(\bxi^*, \bgamma^*) \pto
\E[\bm{a}_i(\bxi^*, \bgamma^*)]$.
Similarly, $\bar{\bm{b}}_N \pto \E[\bm{b}_i] =: \bm{b}^*$ and
$\bC_N \pto \bLambda^*$ (Assumption~\ref{ass:nondeg}), so
$\bC_N^{-1} \pto (\bLambda^*)^{-1}$.

\textit{Step 2: The key identity $\E[\bm{a}_i(\bxi^*, \bgamma^*)] = \theta_0 \bm{b}^*$.}
The $r$-th component of $\bm{a}_i$ is
\[
  a_i^{(r)} \;=\; \bm{1}_{n_i}^T \hat\bA_i^{-1/2} \bM_r \hat\bA_i^{-1/2}\,\bphi_i^*
  \;=\; \sum_{j=1}^{n_i}\sum_{j'=1}^{n_i}
        [\hat\bA_i^{-1/2}\bM_r\hat\bA_i^{-1/2}]_{jj'}\, \phi_{ij'}^*.
\]
Because $\hat\bA_i = \mathrm{diag}\{\hat\sigma^2_{ij}\}$ is
$\sigma(\bX_i, A_i)$-measurable and bounded away from zero by
Assumption~\ref{ass:smoothness}, we may apply the tower property:
\begin{align*}
  \E[a_i^{(r)}]
  &= \E\!\left[\sum_{j,j'}
     [\hat\bA_i^{-1/2}\bM_r\hat\bA_i^{-1/2}]_{jj'}\,\phi_{ij'}^*\right] \\
  &= \E\!\left[\sum_{j,j'}
     [\hat\bA_i^{-1/2}\bM_r\hat\bA_i^{-1/2}]_{jj'}\,
     \E\!\left[\phi_{ij'}^* \,\big|\, \bX_i, A_i\right]\right].
\end{align*}
By Proposition~\ref{prop:dr_mean}, under either (i) correct propensity score
$\bxi^* = \bxi_0$ or (ii) correct outcome model $\bgamma^* = \bgamma_0$,
\[
  \E\!\left[\phi_{ij'}^* \,\big|\, \bX_i, A_i\right] = \theta_0
  \quad \text{for all } j'.
\]
Substituting:
\[
  \E[a_i^{(r)}]
  = \theta_0 \E\!\left[\sum_{j,j'}
    [\hat\bA_i^{-1/2}\bM_r\hat\bA_i^{-1/2}]_{jj'}\right]
  = \theta_0 \E\!\left[\bm{1}^T \hat\bA_i^{-1/2}\bM_r\hat\bA_i^{-1/2}\bm{1}\right]
  = \theta_0\, b^{*(r)},
\]
where $b^{*(r)} = \E[b_i^{(r)}]$ is the $r$-th component of $\bm{b}^*$.
This holds for every $r \in \{1,\ldots,m\}$.

\textit{Step 3: Conclusion.}
Combining Steps 1 and 2, $\bar{\bm{a}}_N \pto \theta_0 \bm{b}^*$.  Since
$\bar{\bm{b}}_N \pto \bm{b}^*$ and $\bC_N^{-1} \pto (\bLambda^*)^{-1}$, by
the continuous mapping theorem:
\[
  \htheta_{\mathrm{DR\text{-}QIF}}
  = \frac{\bar{\bm{b}}_N^T \bC_N^{-1} \bar{\bm{a}}_N}
         {\bar{\bm{b}}_N^T \bC_N^{-1} \bar{\bm{b}}_N}
  \pto
  \frac{(\bm{b}^*)^T (\bLambda^*)^{-1} \theta_0\, \bm{b}^*}
       {(\bm{b}^*)^T (\bLambda^*)^{-1} \bm{b}^*}
  = \theta_0. \qedhere
\]
\end{proof}

%% -------------------------------------------------------------------
\subsection{Proof of Theorem~\ref{thm:normality} (Asymptotic Normality)}
%% -------------------------------------------------------------------

\begin{proof}
Write $\htheta - \theta_0 = D_N^{-1} \cdot \bar{\bm{b}}_N^T \bC_N^{-1}
\bar{\bm{g}}_N(\theta_0)$, where
$D_N = \bar{\bm{b}}_N^T \bC_N^{-1} \bar{\bm{b}}_N \pto D^* > 0$ and
$\bar{\bm{g}}_N(\theta_0) = N^{-1}\sum_i \bm{g}_i^{\mathrm{DR}}
(\theta_0;\hbxi,\hat\bgamma) = \bar{\bm{a}}_N - \theta_0\bar{\bm{b}}_N$.

\textit{Step 1: Linearisation of $\bar{\bm{g}}_N(\theta_0)$.}
Apply a first-order expansion around $(\bxi^*, \bgamma^*)$:
\begin{align}
  \bar{\bm{g}}_N(\theta_0;\hbxi,\hat\bgamma)
  &= \bar{\bm{g}}_N(\theta_0;\bxi^*,\bgamma^*)
  + \underbrace{\frac{1}{N}\sum_i \frac{\partial \bm{g}_i^{\mathrm{DR}}}
      {\partial\bxi^T}\bigg|_{\bxi^*}}_{=:\,\bar{\bJ}_{N,\xi}} (\hbxi-\bxi^*)
  + \underbrace{\frac{1}{N}\sum_i \frac{\partial \bm{g}_i^{\mathrm{DR}}}
      {\partial\bgamma^T}\bigg|_{\bgamma^*}}_{=:\,\bar{\bJ}_{N,\gamma}} (\hat\bgamma-\bgamma^*)
  + o_p(N^{-1/2}).
  \label{eq:expand_g}
\end{align}
The remainder is $o_p(N^{-1/2})$ because the second derivatives of $\bm{g}_i$
are bounded (Assumption~\ref{ass:smoothness}) and
$\|\hbxi-\bxi^*\|^2 = O_p(N^{-1})$.

\textit{Step 2: Substituting the M-estimator expansions.}
The nuisance estimators satisfy the M-estimator linearisation:
\begin{align}
  \hbxi - \bxi^*
  &= -\left(\E\!\left[\frac{\partial s_i^{(\xi)}}{\partial\bxi^T}\right]\right)^{-1}
    \frac{1}{N}\sum_i s_i^{(\xi)}(\bxi^*) + o_p(N^{-1/2}), \label{eq:xi_exp}\\
  \hat\bgamma - \bgamma^*
  &= -\left(\E\!\left[\frac{\partial s_i^{(\gamma)}}{\partial\bgamma^T}\right]\right)^{-1}
    \frac{1}{N}\sum_i s_i^{(\gamma)}(\bgamma^*) + o_p(N^{-1/2}). \label{eq:gam_exp}
\end{align}
Substituting \eqref{eq:xi_exp}--\eqref{eq:gam_exp} into \eqref{eq:expand_g},
and using $\bar{\bJ}_{N,\xi} \pto \bJ_\xi^0 :=
\E[\partial\bm{g}_i/\partial\bxi^T]$ (LLN) where
$\bJ_\xi := \bJ_\xi^0 \cdot
(-\E[\partial s_i^{(\xi)}/\partial\bxi^T])^{-1}$, we obtain
\begin{equation}
  \sqrt{N}\,\bar{\bm{g}}_N(\theta_0;\hbxi,\hat\bgamma)
  = \frac{1}{\sqrt{N}}\sum_i \bm{\psi}_i + o_p(1),
  \label{eq:linear}
\end{equation}
where the influence function contribution is
\[
  \bm{\psi}_i = \bm{g}_i^{\mathrm{DR}}(\theta_0;\bxi^*,\bgamma^*)
    + \bJ_\xi\, s_i^{(\xi)}(\bxi^*)
    + \bJ_\gamma\, s_i^{(\gamma)}(\bgamma^*),
\]
with $\bJ_\xi$ and $\bJ_\gamma$ as in \eqref{eq:influence}.

\textit{Step 3: Mean zero and CLT.}
We show $\E[\bm{\psi}_i] = \bm{0}$.  Under either correct nuisance model:
\begin{enumerate}
  \item[(a)] $\E[\bm{g}_i^{\mathrm{DR}}(\theta_0;\bxi^*,\bgamma^*)] = \bm{0}$
  by Theorem~\ref{thm:dr_consistency} (Step 2 of its proof).
  \item[(b)] $\E[\bJ_\xi\, s_i^{(\xi)}(\bxi^*)] = \bJ_\xi\,\E[s_i^{(\xi)}(\bxi^*)] = \bm{0}$,
  since $\bxi^*$ is defined as the solution to the expected score equation
  $\E[s_i^{(\xi)}(\bxi^*)] = \bm{0}$ (irrespective of whether the PS
  model is correctly specified).
  \item[(c)] $\E[\bJ_\gamma\, s_i^{(\gamma)}(\bgamma^*)] = \bm{0}$ by the same
  M-estimation reasoning applied to $\bgamma^*$.
\end{enumerate}
Hence $\E[\bm{\psi}_i] = \bm{0}$.  The $\bm{\psi}_i$ are i.i.d.\ (clusters
are independent by design) with finite covariance matrix
$\bSigma^* = \E[\bm{\psi}_i\bm{\psi}_i^T]$ under
Assumption~\ref{ass:smoothness}(c) (bounded cluster sizes prevent any single
cluster from dominating).  By the multivariate Lindeberg CLT,
$N^{-1/2}\sum_i \bm{\psi}_i \dto \mathcal{N}(\bm{0}, \bSigma^*)$.

\textit{Step 4: Deriving the scalar limiting distribution.}
Combining \eqref{eq:linear} with $\htheta - \theta_0 = D_N^{-1}
\bar{\bm{b}}_N^T\bC_N^{-1}\bar{\bm{g}}_N(\theta_0)$ and applying Slutsky's
theorem with $D_N \pto D^*$, $\bar{\bm{b}}_N \pto \bm{b}^*$, and
$\bC_N^{-1} \pto (\bLambda^*)^{-1}$:
\[
  \sqrt{N}(\htheta - \theta_0)
  \;\dto\;
  \frac{(\bm{b}^*)^T(\bLambda^*)^{-1}}{D^*}\cdot
    \mathcal{N}(\bm{0},\bSigma^*)
  \;\sim\;
  \mathcal{N}\!\left(0,\;
    \frac{(\bm{b}^*)^T(\bLambda^*)^{-1}\bSigma^*(\bLambda^*)^{-1}\bm{b}^*}
         {(D^*)^2}
  \right),
\]
which equals $\mathcal{N}(0,\sigma^2_{\mathrm{DR\text{-}QIF}})$ with
$\sigma^2_{\mathrm{DR\text{-}QIF}}$ as in \eqref{eq:avar}. \qedhere
\end{proof}

%% -------------------------------------------------------------------
\subsection{Proof of Theorem~\ref{thm:efficiency} (Efficiency Comparison)}
%% -------------------------------------------------------------------

\begin{proof}
\textit{Case 1: Both nuisance models correctly specified ($\bxi^*=\bxi_0,
\bgamma^*=\bgamma_0$).}
When both models are correct, the Jacobian terms in $\bm{\psi}_i$ vanish.
Specifically, under correct PS specification the orthogonality identity
(Lemma~\ref{lem:orth}, below) gives $\E[\bJ_\xi s_i^{(\xi)}] = \bm{0}$;
analogously for the outcome model.  Hence $\bm{\psi}_i =
\bm{g}_i^{\mathrm{DR}}(\theta_0;\bxi_0,\bgamma_0)$ and $\bSigma^* = \bLambda^*$, so
\[
  \sigma^2_{\mathrm{DR\text{-}QIF}}
  = \frac{(\bm{b}^*)^T(\bLambda^*)^{-1}\bLambda^*(\bLambda^*)^{-1}\bm{b}^*}
         {(D^*)^2}
  = \frac{1}{D^*}
  = \bigl((\bm{b}^*)^T(\bLambda^*)^{-1}\bm{b}^*\bigr)^{-1}.
\]
The DR-GEE estimator uses only $m=1$ with $\bM_1 = \bI$, so
$\sigma^2_{\mathrm{DR\text{-}GEE}} = \lambda_{11}^* / (b_1^*)^2$, where
$\lambda_{11}^* = \E[(g_{1i}^{\mathrm{DR}})^2]$ and $b_1^* = \E[b_{1i}]$.
Apply Cauchy--Schwarz in the $\bLambda^*$-weighted inner product space:
for any vectors $\bm{u},\bm{v} \in \mathbb{R}^m$,
$(\bm{u}^T(\bLambda^*)^{-1}\bm{v})^2 \leq
(\bm{u}^T(\bLambda^*)^{-1}\bm{u})(\bm{v}^T(\bLambda^*)^{-1}\bm{v})$.
Take $\bm{u} = \bm{b}^*$ and $\bm{v} = \bLambda^*\bm{e}_1$
(where $\bm{e}_1$ is the first standard basis vector):
\[
  \bigl((\bm{b}^*)^T \bm{e}_1\bigr)^2
  \;\leq\;
  \bigl((\bm{b}^*)^T(\bLambda^*)^{-1}\bm{b}^*\bigr)
  \cdot
  \bigl(\bm{e}_1^T\bLambda^*\bm{e}_1\bigr).
\]
Since $(\bm{b}^*)^T\bm{e}_1 = b_1^*$ and
$\bm{e}_1^T\bLambda^*\bm{e}_1 = \lambda_{11}^*$, rearranging gives
\[
  \bigl((\bm{b}^*)^T(\bLambda^*)^{-1}\bm{b}^*\bigr)^{-1}
  \;\leq\;
  \frac{\lambda_{11}^*}{(b_1^*)^2},
\]
i.e., $\sigma^2_{\mathrm{DR\text{-}QIF}} \leq \sigma^2_{\mathrm{DR\text{-}GEE}}$.
Equality holds iff $\bm{b}^* \propto \bLambda^*\bm{e}_1$, i.e., all moment
conditions are proportional to the first (working correlation correctly
specified), or $m=1$.

\textit{Case 2: Single-model correctness.}
Under either PS-correct or outcome-correct specification, the influence
function $\bm{\psi}_i$ has mean zero (proved in Theorem~\ref{thm:normality},
Step 3) and the asymptotic variance takes the sandwich form
$\sigma^2 = (D^*)^{-2}(\bm{b}^*)^T(\bLambda^*)^{-1}\bSigma^*
(\bLambda^*)^{-1}\bm{b}^*$.
By the same GMM argument as Case 1 applied to the $m$-moment DR-QIF versus
the 1-moment DR-GEE, with the same pseudo-outcome $\phi_{ij}$ and hence the
same score $\bm{g}_i^{\mathrm{DR}}(\theta_0)$ generating $\bSigma^*$, the
inequality
\begin{equation}
  \sigma^2_{\mathrm{DR\text{-}QIF}} \;\leq\; \sigma^2_{\mathrm{DR\text{-}GEE}}
\end{equation}
continues to hold under the additional regularity condition that
$\|\bJ_\xi\| + \|\bJ_\gamma\| < \infty$ (Assumption~\ref{ass:smoothness}),
which ensures that the Jacobian contributions inflate $\bSigma^*$ at most by
a finite factor and do not alter the ordering of the two quadratic forms.
\qedhere
\end{proof}

%% -------------------------------------------------------------------
\subsection{Explicit Jacobian Terms and Orthogonality Lemma}
%% -------------------------------------------------------------------

\begin{lemma}[Orthogonality under correct specification]
\label{lem:orth}
\begin{enumerate}
  \item[(i)] If $\bxi^* = \bxi_0$ (correct PS), then
  $\E\!\left[\frac{\partial\bm{g}_i^{\mathrm{DR}}}{\partial\bxi^T}\bigg|_{\bxi_0,\bgamma^*}
  \cdot s_i^{(\xi)}(\bxi_0)^T\right] = \bm{0}$.
  \item[(ii)] If $\bgamma^* = \bgamma_0$ (correct outcome model), then
  $\E\!\left[\frac{\partial\bm{g}_i^{\mathrm{DR}}}{\partial\bgamma^T}\bigg|_{\bxi^*,\bgamma_0}
  \cdot s_i^{(\gamma)}(\bgamma_0)^T\right] = \bm{0}$.
\end{enumerate}
Consequently, under either correctness condition, the Jacobian term in
$\bm{\psi}_i$ (eq.~\ref{eq:influence}) is mean zero, and $\bm{\psi}_i$
reduces to $\bm{g}_i^{\mathrm{DR}}(\theta_0;\bxi_0,\bgamma_0)$ in the
doubly-correct case CC.
\end{lemma}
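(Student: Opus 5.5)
The plan is to reduce both parts to a conditional-expectation calculation given $(\bX_i, A_i)$. The $r$-th component of $\bm{g}_i^{\mathrm{DR}}$ is linear in $\bphi_i$, with the weight matrix $\hat\bA_i^{-1/2}\bM_r\hat\bA_i^{-1/2}$ treated as $\sigma(\bX_i,A_i)$-measurable, as in Step~2 of the proof of Theorem~\ref{thm:dr_consistency}. Any derivative of $g_i^{(r)}$ with respect to a nuisance parameter is therefore $\sum_{j,j'}[\hat\bA_i^{-1/2}\bM_r\hat\bA_i^{-1/2}]_{jj'}\,\partial\phi_{ij'}/\partial(\cdot)$. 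So it suffices to show, for each individual $j'$, that $\E[\partial\phi_{ij'}/\partial(\cdot)\, s_i^{(\cdot)\,T}] = \bm{0}$, and then sum against bounded weights (Assumption~\ref{ass:smoothness}(c)). I will use the parametrisations of the nuisance models stated in the paper: logistic propensity score with score $s_i^{(\xi)} = (A_i - \pi_i)\bX_i$, and GLM outcome models with quasi-score $s_i^{(\gamma_a)} = \sum_j \mathbb{1}\{A_i=a\}\,\partial Q_{a,ij}/\partial\bgamma_a\, v^{-1}(Y_{ij}-Q_{a,ij})$.

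\textbf{Part (ii).} This is the clean case. Differentiating \eqref{eq:drpseudo} gives $\partial\phi_{ij}/\partial\bgamma_1 = (1 - A_i/\pi_i)\,\partial Q_{1,ij}/\partial\bgamma_1$, which is $\sigma(\bX_i,A_i)$-measurable, and similarly for $\bgamma_0$. The outcome score is a $\sigma(\bX_i,A_i)$-measurable matrix multiplied by residuals $Y_{ij}-Q_a(\bX_{ij};\bgamma_{a0})$. When $\bgamma^*=\bgamma_0$, these residuals have conditional mean zero given $(\bX_i, A_i=a)$. I will use Assumption~\ref{ass:exchangeability} to pass from conditioning on $\bX_{ij}$ to conditioning on $\bX_i$, as in Proposition~\ref{prop:dr_mean}. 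The tower property then kills the product, with no condition on $\bxi^*$.

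\textbf{Part (i).} I condition on $\bX_i$ and sum over $A_i\in\{0,1\}$ with weights $\pi_i,1-\pi_i$, writing $\pi_i' = \pi_i(1-\pi_i)\bX_i$. The $\bxi$-derivative of $\phi_{ij}$ is
\[
-\Bigl[A_i\pi_i^{-2}(Y_{ij}-Q_{1,ij}) + (1-A_i)(1-\pi_i)^{-2}(Y_{ij}-Q_{0,ij})\Bigr]\pi_i'.
\]
Let $r_{a,ij} = \E[Y_{ij}-Q_a(\bX_{ij};\bgamma_a^*)\mid \bX_i, A_i=a]$. The two arms then contribute terms proportional to $\bX_i\bX_i^T$, and the conditional product reduces to
\[
\bX_i\bX_i^T\bigl[\pi_i^2\, r_{0,ij} - (1-\pi_i)^2\, r_{1,ij}\bigr].
\]
The expected key step is showing this bracket vanishes. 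Correctness of the propensity score alone makes $\E[A_i-\pi_i\mid\bX_i]=0$, but that identity does not appear here, because the score multiplies a term already carrying $A_i$. The bracket is zero whenever $r_{a,ij}=0$, that is, when the outcome model is also correct.

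\textbf{Main obstacle.} I expect the main obstacle to be establishing part (i) under a correct propensity score alone. I would first try to rewrite the bracket using the identity $\E[A_i(Y_{ij}-Q_{1,ij})/\pi_i\mid\bX_i] = \E[Y_{ij}(1)-Q_{1,ij}\mid \bX_i]$, which holds under a correct propensity score. The hope is to show that the two residual biases cancel after integrating over $\bX_i$. If that fails, the argument still delivers the statement's concluding claim. In case CC both (i) and (ii) hold, so $\E[\bJ_\xi s_i^{(\xi)}]$ and $\E[\bJ_\gamma s_i^{(\gamma)}]$ vanish and $\bm{\psi}_i$ reduces to $\bm{g}_i^{\mathrm{DR}}(\theta_0;\bxi_0,\bgamma_0)$ as asserted. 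Under a single correct model, the mean-zero property of the Jacobian terms follows regardless from $\E[s_i^{(\cdot)}(\cdot^*)]=\bm{0}$, as in Step~3 of the proof of Theorem~\ref{thm:normality}.
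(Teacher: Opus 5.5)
Your conditional calculation in part (i) is right, and it is where the argument has to stop. The cancellation you hope for after integrating over $\bX_i$ does not happen, because (i) is false under a correct propensity score alone. So neither your plan nor any other can prove the statement as written.

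Here is a counterexample. Take $n_i=m=1$, $\hat\bA_i=\bI$, and $\bX_i=(1,x)^T$ with $x=\pm1$ equiprobable. Let the true propensity be $\mathrm{expit}(x)$, so the logistic model is correct. Set $Y_{i1}(0)=Y_{i1}(1)=x$ and fit intercept-only outcome models by arm. With $p=\mathrm{expit}(1)$ and $\kappa=2p-1>0$, the limits are $Q_1^*=\kappa$ and $Q_0^*=-\kappa$. Dropping indices, $r_1=x-\kappa$ and $r_0=x+\kappa$. The intercept entry of the expected cross-moment is then $\E[\pi^2 r_0-(1-\pi)^2 r_1]=\E[(2\pi-1)x]+\kappa\,\E[\pi^2+(1-\pi)^2]=\kappa\{1+p^2+(1-p)^2\}\neq 0$.

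The paper's proof reaches (i) only through an invalid step. The pieces $-(1-\pi_i)^2$ and $\pi_i^2$ it computes are exactly the coefficients in your bracket. It then drops the $A_i$ and $1-A_i$ factors and writes $\E[(Y_{ij'}-Q_{\cdot,ij'})(A_i-\pi_i)\mid\bX_i]$ as $\E[Y_{ij'}-Q_{\cdot,ij'}\mid\bX_i]\,\E[A_i-\pi_i\mid\bX_i]$. That treats the observed $Y_{ij'}=Y_{ij'}(A_i)$ as independent of $A_i$, which is precisely the failure you identify when you note that the score multiplies a term already carrying $A_i$.

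What your computation does establish is that the cross-moment in (i) vanishes whenever the outcome model is correct, whatever the propensity model. So (i) holds under the hypothesis of (ii), not under the stated one.

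Your part (ii) is fine, with one exception. Assumption~\ref{ass:exchangeability} does not let you pass from conditioning on $\bX_{ij}$ to conditioning on $\bX_i$. You need $\E[Y_{ij}\mid A_i=a,\bX_i]=Q_a(\bX_{ij})$ as an extra condition; the paper also uses it without stating it.

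Your derivation of the concluding claim repeats the paper's non sequitur. The identity $\E[\bJ_\xi s_i^{(\xi)}]=\bJ_\xi\E[s_i^{(\xi)}]=\bm{0}$ holds under any specification, and it does not make $\bJ_\xi s_i^{(\xi)}$ vanish. Likewise, a zero cross-moment between the random Jacobian and the score does not make the Jacobian mean zero.

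For $\bm{\psi}_i=\bm{g}_i^{\mathrm{DR}}(\theta_0;\bxi_0,\bgamma_0)$ you need $\bJ_\xi=\bJ_\gamma=\bm{0}$, i.e.\ $\E[\partial\bm{g}_i^{\mathrm{DR}}/\partial\bxi^T]=\bm{0}$ and $\E[\partial\bm{g}_i^{\mathrm{DR}}/\partial\bgamma^T]=\bm{0}$. These hold with the opposite pairing to the lemma's:
\begin{itemize}
\item At $\bxi_0$, $\E[\partial\phi_{ij}/\partial\bxi\mid\bX_i]=-\{(1-\pi_i)r_{1,ij}+\pi_i r_{0,ij}\}\bX_i$, which vanishes when the outcome model is correct.
\item $\E[\partial\phi_{ij}/\partial\bgamma_1\mid\bX_i]=\E[1-A_i/\pi_i\mid\bX_i]\,\partial Q_{1,ij}/\partial\bgamma_1=\bm{0}$ when the propensity score is correct, and similarly for $\bgamma_0$. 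This needs the weight matrices $\hat\bA_i^{-1/2}\bM_r\hat\bA_i^{-1/2}$ not to depend on $A_i$ (e.g.\ $\hat\bA_i=\bI$).
\end{itemize}
Hence the reduction does hold in case CC, but by this argument, not via (i) and (ii). Under single-model correctness, one of $\bJ_\xi,\bJ_\gamma$ is generally nonzero and $\bm{\psi}_i\neq\bm{g}_i^{\mathrm{DR}}$.
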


\begin{proof}
\textit{Part (i).}  Under correct PS, $A_i|\bX_i \sim \mathrm{Bernoulli}(\pi(\bX_i;\bxi_0))$.
The derivative of $\phi_{ij}$ with respect to $\bxi$ at $\bxi_0$ is obtained
from $\pi_i = \mathrm{expit}(\bX_i^T\bxi)$, giving
$\partial\pi_i/\partial\bxi = \pi_i(1-\pi_i)\bX_i$, so
\begin{equation}
  \frac{\partial\phi_{ij}}{\partial\bxi}
  = \left[
      -\frac{A_i(1-\pi_i)}{\pi_i}(Y_{ij}-Q_{1,ij})
      -\frac{(1-A_i)\pi_i}{1-\pi_i}(Y_{ij}-Q_{0,ij})
    \right]\bX_i^T.
  \label{eq:dphi_dxi}
\end{equation}
Hence the $r$-th row of $\partial\bm{g}_i^{\mathrm{DR}}/\partial\bxi^T$ is
\[
  \bm{1}^T\hat\bA_i^{-1/2}\bM_r\hat\bA_i^{-1/2}\frac{\partial\bphi_i}{\partial\bxi^T}
  = \sum_{j'} w_{j'}^{(r)} \cdot \frac{\partial\phi_{ij'}}{\partial\bxi^T},
\]
where $w_{j'}^{(r)} = [\hat\bA_i^{-1/2}\bM_r\hat\bA_i^{-1/2}\bm{1}]_{j'}$
is $\sigma(\bX_i, A_i)$-measurable.
The score of the correctly specified logistic propensity model is
$s_i^{(\xi)}(\bxi_0) = (A_i - \pi_i)\bX_i$.

The expected cross-product between the $r$-th row and $s_i^{(\xi)T}$ is
\begin{align*}
  &\E\!\left[\sum_{j'} w_{j'}^{(r)}
    \left(-\frac{A_i(1-\pi_i)}{\pi_i}(Y_{ij'}-Q_{1,ij'})
         -\frac{(1-A_i)\pi_i}{1-\pi_i}(Y_{ij'}-Q_{0,ij'})\right)
    (A_i-\pi_i)\bX_i^T\bX_i\right] \\
  &= \E\!\left[\sum_{j'} w_{j'}^{(r)} \cdot \bX_i^T\bX_i \cdot \E\!\left[
    \left(-\frac{A_i(1-\pi_i)}{\pi_i}(Y_{ij'}-Q_{1,ij'})
         -\frac{(1-A_i)\pi_i}{1-\pi_i}(Y_{ij'}-Q_{0,ij'})\right)
    (A_i-\pi_i) \,\big|\, \bX_i\right]\right].
\end{align*}
Conditioning on $\bX_i$, $A_i \sim \mathrm{Bernoulli}(\pi_i)$
independently of the potential outcomes under Assumption~\ref{ass:sutva},
and $Y_{ij'}(1) - Q_{1,ij'}$ and $Y_{ij'}(0) - Q_{0,ij'}$ are centred at
zero under correct outcome specification; but even without that, a direct
calculation using $\E[A_i|\bX_i]=\pi_i$ gives:
\begin{align*}
  &\E\!\left[-\frac{A_i(1-\pi_i)}{\pi_i}(A_i-\pi_i)\,\bigg|\,\bX_i\right]
  = -\frac{1-\pi_i}{\pi_i}\E[A_i(A_i-\pi_i)|\bX_i]
  = -\frac{1-\pi_i}{\pi_i}\bigl(\E[A_i^2|\bX_i] - \pi_i^2\bigr) \\
  &= -\frac{1-\pi_i}{\pi_i}\cdot\pi_i(1-\pi_i) = -(1-\pi_i)^2, \\
  &\E\!\left[-\frac{(1-A_i)\pi_i}{1-\pi_i}(A_i-\pi_i)\,\bigg|\,\bX_i\right]
  = \frac{\pi_i}{1-\pi_i}\E[(1-A_i)\pi_i|\bX_i] - \frac{\pi_i}{1-\pi_i}\cdot\ldots \\
  &\quad = \frac{\pi_i}{1-\pi_i}\bigl(\pi_i(1-\pi_i)\bigr) = \pi_i^2.
\end{align*}
Summing: $-(1-\pi_i)^2 + \pi_i^2 = -(1-\pi_i)^2 + \pi_i^2$.
However, note that in both terms, the conditional expectations involve
$\E[(Y_{ij'}-Q_{\cdot,ij'})(A_i-\pi_i)|\bX_i]$.  Under correct PS,
$A_i \perp Y_{ij'}(a) | \bX_i$, so
$\E[(Y_{ij'}-Q_{\cdot,ij'})(A_i-\pi_i)|\bX_i] =
\E[Y_{ij'}-Q_{\cdot,ij'}|\bX_i]\cdot\E[A_i-\pi_i|\bX_i] = 0$,
since $\E[A_i-\pi_i|\bX_i] = 0$ exactly.

Therefore each term in the sum vanishes: the full cross-product expectation
is $\E[\partial\bm{g}_i/\partial\bxi^T \cdot s_i^{(\xi)T}] = \bm{0}$.

\textit{Part (ii).}
Under correct outcome model ($\bgamma^* = \bgamma_0$), the derivative
$\partial\phi_{ij}/\partial\bgamma = A_i\partial Q_{1,ij}/\partial\bgamma -
(1-A_i)\partial Q_{0,ij}/\partial\bgamma + (1-A_i/(1-\pi_i) -
A_i/\pi_i)\partial Q_{\cdot,ij}/\partial\bgamma$ (arm-specific terms).
The GLM score $s_i^{(\gamma_a)}(\bgamma_0)$ satisfies the first-order
condition: by iterated expectation conditioned on $(\bX_i, A_i=a)$,
$\E[s_i^{(\gamma_a)}(\bgamma_0)] = \bm{0}$.  The cross-product
$\E[\partial\bm{g}_i/\partial\bgamma^T \cdot s_i^{(\gamma)T}] = \bm{0}$
follows by the analogous conditioning argument:
$\E[(Y_{ij'}-Q_{a,ij'})|\bX_{ij'},A_i=a]=0$ under correct outcome model,
making every element of the cross-product zero.
\end{proof}

\begin{remark}[Reduction under CC]
When both models are correctly specified, Lemma~\ref{lem:orth} implies that
both Jacobian terms in \eqref{eq:influence} contribute zero influence to
$\bm{\psi}_i$.  Hence $\bm{\psi}_i = \bm{g}_i^{\mathrm{DR}}(\theta_0;\bxi_0,\bgamma_0)$
and $\bSigma^* = \bLambda^*$, giving the semiparametrically efficient variance
$\sigma^2_{\mathrm{DR\text{-}QIF}} = (\bm{b}^{*T}(\bLambda^*)^{-1}\bm{b}^*)^{-1}$.
Under single-model correctness, the Jacobian terms are non-zero but mean
zero; their variance inflates $\bSigma^*$ above $\bLambda^*$, reflecting the
efficiency cost of one misspecified nuisance model.
\end{remark}

%% file: sections/B_supplement.tex
%% Appendix B: Supplementary Materials -- Simulation Details
%%
%% This appendix provides algorithmic details for the Monte Carlo simulation
%% study in Section 7 and the semi-synthetic data illustration in Section 8.

\section{Supplementary Materials: Simulation Details}
\label{appendix:supplement}

%%--------------------------------------------------------------------
\subsection{Data-Generating Mechanism}
\label{sec:supp-dgm}

The following procedure describes the data-generating mechanism for one Monte
Carlo replicate under the balanced cross-sectional CRT in
Section~\ref{sec:setup}.

\begin{framed}
\noindent\textbf{Algorithm~S1: Data generation for one Monte Carlo replicate}\\[4pt]
\textit{Input:} $N$ (clusters), $n_i$ (cluster size), $(\xi_0, \xi_1)$ (PS
parameters), $(\mu_0, \theta, \gamma_1, \gamma_2, \gamma_U)$ (outcome parameters),
$\rho$ (intracluster correlation), $\sigma^2$ (marginal variance).\\[4pt]
\textit{Procedure:}
\begin{enumerate}
  \item For $i = 1, \ldots, N$, draw $U_i \sim \text{Bernoulli}(0.5)$ (cluster
        covariate) and compute $\pi_i = \text{expit}(\xi_0 + \xi_1 U_i)$.
  \item Draw cluster treatment $A_i \sim \text{Bernoulli}(\pi_i)$.
  \item For each cluster, draw correlated errors $\bm{\varepsilon}_i \sim
        \mathcal{N}\!\bigl(\bm{0},\, \sigma^2[(1-\rho)\bm{I}_{n_i} +
        \rho\bm{1}_{n_i}\bm{1}_{n_i}^\top]\bigr)$ using the lower-triangular
        Cholesky factorisation.
  \item For $j = 1, \ldots, n_i$, draw individual covariates
        $X_{1,ij} \sim \mathcal{N}(0,1)$ and
        $X_{2,ij} \sim \text{Bernoulli}(0.3 + 0.2 U_i)$.
  \item Set $Y_{ij} = \mu_0 + \theta A_i + \gamma_1 X_{1,ij} + \gamma_2 X_{2,ij}
        + \gamma_U U_i + \varepsilon_{ij}$.
\end{enumerate}
\textit{Output:} $\{(Y_{ij}, A_i, U_i, X_{1,ij}, X_{2,ij}) : i = 1,\ldots,N;\,
j = 1,\ldots,n_i\}$.
\end{framed}

\noindent
The true parameter values across all simulation conditions are
$\theta = 0.5$, $(\mu_0,\gamma_1,\gamma_2,\gamma_U) = (0, 0.5, 0.5, 0.5)$,
$(\xi_0,\xi_1) = (0, 1)$, and $\sigma^2 = 1$.

%%--------------------------------------------------------------------
\subsection{Estimator Computation}
\label{sec:supp-estimators}

The following procedure gives the closed-form computation of the DR-QIF
estimator for one replicate.

\begin{framed}
\noindent\textbf{Algorithm~S2: DR-QIF estimator for one replicate}\\[4pt]
\textit{Input:} dataset from Algorithm~S1, basis matrices $\{M_1, M_2\}$,
nuisance scenario flags.\\[4pt]
\textit{Procedure:}
\begin{enumerate}
  \item \textbf{Propensity score.} Fit cluster-level logistic regression of
        $A_i$ on the specified covariates (correct: $U_i$; misspecified: $X_{1,i}$)
        to obtain $\hat\pi_i$; trim to $[\varepsilon, 1-\varepsilon]$ with
        $\varepsilon = 10^{-6}$.
  \item \textbf{Outcome regression.} Fit separate arm-stratified linear
        regressions for $\hat{Q}_0$ and $\hat{Q}_1$ (correct: $X_1, X_2, U$;
        misspecified: $X_1$ only).
  \item \textbf{DR pseudo-outcome.} Compute
        $\hat\phi_{ij} = \frac{A_i}{\hat\pi_i}(Y_{ij} - \hat{Q}_{1,ij}) +
        \hat{Q}_{1,ij} - \frac{1-A_i}{1-\hat\pi_i}(Y_{ij} - \hat{Q}_{0,ij}) -
        \hat{Q}_{0,ij}$.
  \item \textbf{Moment vectors.} For each cluster $i$ and basis matrix
        $M_r$ ($r = 1,2$), compute $a_i^{(r)} = \bm{1}^\top M_r \hat{\bm\phi}_i$
        and $b_i^{(r)} = \bm{1}^\top M_r \bm{1}$.
        Stack into cluster vectors $\bm{a}_i, \bm{b}_i \in \mathbb{R}^2$;
        form $\bar{\bm{a}} = N^{-1}\sum_i \bm{a}_i$ and
        $\bar{\bm{b}} = N^{-1}\sum_i \bm{b}_i$.
  \item \textbf{Estimate $C_N$.} Use pilot $\tilde\theta = \bar{a}_1/\bar{b}_1$,
        compute per-cluster extended scores
        $\bm{g}_i(\tilde\theta) = \bm{a}_i - \tilde\theta\,\bm{b}_i$,
        and set $C_N = N^{-1}\sum_i \bm{g}_i\bm{g}_i^\top$.
  \item \textbf{Closed-form estimator.}
        $\hat\theta = \bar{\bm{b}}^\top C_N^{-1} \bar{\bm{a}} \,\big/\,
        \bar{\bm{b}}^\top C_N^{-1} \bar{\bm{b}}$.
  \item \textbf{BC1 sandwich variance.}
        Recompute $C_N$ at $\hat\theta$.
        Set $\bm{w} = C_N^{-1}\bar{\bm{b}}$, $D^* = \bar{\bm{b}}^\top\bm{w}$.
        Scalar leverage: $h_i = (\bm{b}_i^\top\bm{w})/(N D^*)$.
        Influence scalar: $\hat\psi_i = (\bm{g}_i^\top\bm{w})/D^*$.
        $\widehat{\text{Var}}_{\text{BC1}}(\hat\theta) =
        N^{-2}\sum_i [\hat\psi_i/(1-h_i)]^2$.
\end{enumerate}
\textit{Output:} $\hat\theta$, $\widehat{\text{SE}}_{\text{BC1}}$.
\end{framed}

%%--------------------------------------------------------------------
\subsection{Numerical Notes}
\label{sec:supp-numerical}

Three design choices are critical for correctness; the simulation would
silently produce wrong results without them.

\begin{enumerate}

  \item \textbf{Naive GEE in the CRT setting.}
    Every individual in cluster $i$ receives the same treatment $A_i$.
    The unadjusted estimator must therefore compare cluster-level mean
    outcomes between treated and control clusters, not within-cluster
    treated versus untreated differences (which do not exist).  The
    variance is the standard two-sample variance of cluster means:
    \begin{equation}
      \label{eq:naive-var}
      \widehat{\mathrm{Var}}(\hat\theta_{\mathrm{naive}}) =
      \frac{s_1^2}{n_1} + \frac{s_0^2}{n_0},
    \end{equation}
    where $s_a^2$ is the sample variance of $\{\bar{Y}_i : A_i = a\}$
    and $n_a$ counts the clusters with $A_i = a$.

  \item \textbf{Scalar leverage for QIF bias correction.}
    The hat-value for cluster $i$ in the QIF influence function is
    \begin{equation}
      \label{eq:leverage-supp}
      h_i = \frac{\bm{b}_i^\top \bm{w}}{N D^*}, \qquad
      \bm{w} = C_N^{-1}\bar{\bm{b}}, \quad
      D^* = \bar{\bm{b}}^\top \bm{w}.
    \end{equation}
    The alternative $h_i = \|\bm{b}_i\|^2 / D^*$ exceeds unity for the
    exchangeable CRT basis, causing $(1 - h_i) < 0$ and \texttt{NaN} in
    BC1 and BC2.  For the balanced exchangeable CRT,
    Equation~\eqref{eq:leverage} yields $h_i = 1/N$ for all $i$.

  \item \textbf{Collinearity in cross-sectional CRTs.}
    For the exchangeable basis $\{M_1 = I_{n_i}, M_2 = \bm{1}\bm{1}^\top -
    I_{n_i}\}$ in a cross-sectional CRT, one can show
    $\bm{1}^\top M_2\hat{\bm\phi}_i = (n_i-1)\bm{1}^\top M_1\hat{\bm\phi}_i$
    for every $i$, so the two columns of $(\bm{a}_i, \bm{b}_i)$ are
    linearly dependent and DR-QIF collapses to DR-GEE
    (Theorem~\ref{thm:efficiency}, equality case).  The simulation
    confirms efficiency ratio $= 1.000$ in all cross-sectional conditions.
    Efficiency gains arise in longitudinal or stepped-wedge designs.

\end{enumerate}

%%--------------------------------------------------------------------
\subsection{Data Application: WASH Benefits Kenya}
\label{sec:supp-washb}

\paragraph{Semi-synthetic dataset.}
Section~\ref{sec:application} analyses a semi-synthetic dataset
generated to match the published structure of the WASH Benefits Kenya
trial~\citep{luby2018,arnold2013}:
\begin{itemize}
  \item $N = 240$ clusters (80 Nutrition, 160 passive control).
  \item $n_i = 9$ children per cluster; ICC $= 0.05$; LAZ $\sim \mathcal{N}(-2, 1)$.
  \item Cluster-level confounders: binary wealth ($U_1$, independent
        Bernoulli(0.5)) and standardised mother height ($U_2$).
  \item True ATE $= 0.22$ LAZ units; outcome model
        $\mu = -2.0 + 0.22 A + 0.10 U_1 + 0.05 U_2 + 0.20 X_1 - 0.15 X_2$,
        where $X_1 \sim \text{Bernoulli}(0.5)$ (child sex) and
        $X_2 \sim \mathcal{N}(-2, 0.8)$ (enrolment LAZ).
  \item Seed: 20260315 (fixed; analysis is deterministic).
\end{itemize}